	\def\one{\mathbb{1}}
  \def\cup{\cupprod}
  \def\bigcup{\bigcupprod}
  \def\bigcupdisjoint{\mathop{\kern10pt\raisebox{4pt}{$\cdot$}\kern-12pt\bigcup}\limits}
	 \let\mathbold\bm
	 	\def\one{\mathds{1}}%
\numberwithin{equation}{section}
\newtheoremstyle{ttheorem}%
       {1.8ex\@plus1ex}                
       {2.1ex\@plus1ex\@minus.5ex}      
       {\itshape}           
       {0pt}                   
       {\bfseries}          
       {.}                  
       {.5em}               
       {}                
\newtheoremstyle{ddefinition}%
       {1.8ex\@plus1ex}                
       {2.1ex\@plus1ex\@minus.5ex}      
       {}           
       {0pt}                   
       {\bfseries}           
       {.}                  
       {.5em}               
       {}                
\newtheoremstyle{rremark}%
       {1.8ex\@plus1ex}                
       {2.1ex\@plus1ex\@minus.5ex}      
       {\normalfont}        
       {0pt}                   
       {\bfseries}           
       {.}                  
       {.5em}               
       {}                   
\theoremstyle{ttheorem}
\newtheorem{theorem}{Theorem}[section]
\newtheorem{lemma}[theorem]{Lemma}
\newtheorem{proposition}[theorem]{Proposition}
\newtheorem{corollary}[theorem]{Corollary}
\theoremstyle{ddefinition}
\newtheorem{definition}[theorem]{Definition}
\theoremstyle{rremark}
\newtheorem{remark}[theorem]{Remark}
\newtheorem{myremarks}[theorem]{Remarks}
\newtheorem{myexamples}[theorem]{Examples}
\newenvironment{remarks}{\begin{myremarks}\begin{nummer}}%
    {\end{nummer}\end{myremarks}}
    {\end{nummer}\end{myexamples}}
\newcounter{numcount}
\newcommand{\labelnummer}{(\roman{numcount})}%
\providecommand{\showkeyslabelformat}[1]{\relax}        
\let\mysaveformat\showkeyslabelformat                   %
\def\myformat#1{\raisebox{-1.5ex}{\mysaveformat{#1}}}   %
\newenvironment{nummer}%
  {\let\curlabelspeicher\@currentlabel%
    \begin{list}{\textup{\labelnummer}}%
      {\usecounter{numcount}\leftmargin0pt%
        \topsep0.5ex\partopsep2ex\parsep0pt\itemsep0ex\@plus1\p@%
        \labelwidth2.5em\itemindent3.5em\labelsep1em%
      }%
    \let\saveitem\item%
    \def\item{\saveitem%
      \def\@currentlabel{\curlabelspeicher\kern.1em\labelnummer}}%
    \let\savelabel\label%
    \def\label##1{{\ifnum\thenumcount=1\let\showkeyslabelformat\myformat\fi\savelabel{##1}}%
										{\def\@currentlabel{\labelnummer}%
									 	\let\showkeyslabelformat\@gobble
									 	\savelabel{##1item}%
										}%
	   							}%
  }{\end{list}}%
\let\curlabelspeicher\@currentlabel%
    \let\saveitem\item%
    \def\item{\saveitem%
      \def\@currentlabel{\curlabelspeicher\kern.1em\labelnummer}}%
    \let\savelabel\label%
    \def\label##1{{\ifnum\thenumcount=1\let\showkeyslabelformat\myformat\fi\savelabel{##1}}%
										{\def\@currentlabel{\labelnummer}%
									 	\let\showkeyslabelformat\@gobble
									 	\savelabel{##1item}%
										}%
    							}%
\def\itemref#1{\ref{#1item}}
\def\section{\@startsection{section}{1}%
  \z@{1.3\linespacing\@plus\linespacing}{.5\linespacing}%
  {\normalfont\bfseries\centering}}
\def\subsection{\@startsection{subsection}{2}%
  \z@{.8\linespacing\@plus.5\linespacing}{-1em}%
  {\normalfont\bfseries}}
\def\nlsubsection{\@startsection{subsection}{2}%
  \z@{.8\linespacing\@plus.5\linespacing}{.1ex}%
  {\normalfont\bfseries}}
\let\@afterindenttrue\@afterindentfalse%
\renewenvironment{proof}[1][\proofname]{\par \normalfont
  \topsep6\p@\@plus6\p@ \trivlist 
  \item[\hskip\labelsep\scshape
    #1\@addpunct{.}]\ignorespaces
}{%
  \qed\endtrivlist
}
\def\ps@firstpage{\ps@plain
  \def\@oddfoot{\normalfont\scriptsize \hfil\thepage\hfil
     \global\topskip\normaltopskip}%
  \let\@evenfoot\@oddfoot
  \def\@oddhead{
    \begin{minipage}{\textwidth}
      \normalfont\scriptsize
      \emph{\insertfirsthead}
    \end{minipage}}
  \let\@evenhead\@oddhead 
}
\def\insertfirsthead{}
\def\@cite#1#2{{%
 \m@th\upshape\mdseries[{#1}{\if@tempswa, #2\fi}]}}
\newcommand{\I}{\mathcal{I}}
\newcommand{\F}{\mathcal{F}}
\newcommand{\Nn}{\mathcal{N}}
\newcommand{\SO}{\mathcal{S}}
\newcommand{\C}{\mathbb{C}}
\newcommand{\N}{\mathbb{N}}
\newcommand{\R}{\mathbb{R}}
\newcommand{\Z}{\mathbb{Z}}
\renewcommand{\le}{\leqslant}
\renewcommand{\ge}{\geqslant}
\renewcommand{\i}{\mathrm{i}}
\newcommand{\e}{\mathrm{e}}
\DeclareMathOperator{\tr}{tr}
\DeclareMathOperator{\supp}{supp}
\DeclareMathOperator{\dist}{dist}
\providecommand{\mathbold}[1]{\mathbf{#1}}
\providecommand{\wtilde}[1]{\widetilde{#1}}
\providecommand{\what}[1]{\widehat{#1}}
\providecommand{\bigcupdisjoint}{\mathop{\kern7pt\raisebox{6pt}{$\cdot$}\kern-9.5pt\bigcup}\limits}
\providecommand{\abs}[1]{\lvert#1\rvert}
\providecommand{\norm}[1]{\lVert#1\rVert}
\providecommand{\bigabs}[1]{\bigl\lvert#1\bigr\rvert}
\providecommand{\Bigabs}[1]{\Bigl\lvert#1\Bigr\rvert}
\providecommand{\biggabs}[1]{\biggl\lvert#1\biggr\rvert}
\providecommand{\bignorm}[1]{\bigl\lVert#1\bigr\rVert}
\providecommand{\Bignorm}[1]{\Bigl\lVert#1\Bigr\rVert}
\providecommand{\bigparens}[1]{\bigl(#1\bigr)}
\providecommand{\Bigparens}[1]{\Bigl(#1\Bigr)}
\providecommand{\biggparens}[1]{\biggl(#1\biggr)}
\providecommand{\Bigbraces}[1]{\Bigl\{#1\Bigr\}}
\newcommand{\ac}{\mathrm{ac}}
\newcommand{\loc}{\mathrm{loc}}
\newcommand{\HS}{\mathrm{HS}}
\providecommand{\st}{:\,}
\newcommand{\upto}{\uparrow}
\newcommand{\Oh}{\mathrm{O}}
\newcommand{\oh}{\mathrm{o}}
\newcommand{\dotid}{\,\boldsymbol\cdot\,}           
\DeclareMathOperator{\Borel}{Borel}      
\newcommand{\1}{1}
\newcommand{\upd}{\mathrm{d}}
\renewcommand{\d}{\upd}   
\newcommand{\dx}{\d x}
\newcommand{\dt}{\d t}
\newcommand{\dy}{\d y}
\newcommand{\dz}{\d z}
\newcommand{\spec}{\sigma}               
\begin{document}


\title{Anderson's orthogonality catastrophe}

\author[M.\ Gebert]{Martin Gebert}
\author[H.\ K\"uttler]{Heinrich K\"uttler}
\author[P.\ M\"uller]{Peter M\"uller}

\address{Mathematisches Institut,
  Ludwig-Maximilians-Universit\"at M\"unchen,
  Theresienstra\ss{e} 39,
  80333 M\"unchen, Germany}

\email{gebert@math.lmu.de}
\email{kuettler@math.lmu.de}
\email{mueller@lmu.de}

\thanks{Work supported by Sfb/Tr 12 of the German Research Council
(Dfg). The final publication will appear in
\href{http://dx.doi.org/10.1007/s00220-014-1914-3}{Commun.\ Math.\ Phys.}
and is available at link.springer.com.}

\begin{abstract}
  We give an upper bound on the modulus of the ground-state overlap of two non-interacting fermionic
  quantum systems with $N$ particles in a large but finite volume $L^{d}$ of $d$-dimensional Euclidean space.
  The underlying one-particle Hamiltonians of the two systems are standard Schr\"odinger
  operators that differ by a non-negative compactly supported scalar potential.
  In the thermodynamic limit, the bound exhibits an asymptotic power-law decay in the system size $L$, showing that the ground-state overlap vanishes for macroscopic systems.
  The decay exponent can be interpreted in terms of the total scattering cross section
  averaged over all incident directions. The result confirms and generalises
  P.~W.\ Anderson's informal computation [Phys.\ Rev.\ Lett.\ \textbf{18}, 1049--1051, 1967].
\end{abstract}

\maketitle

%
\section{Introduction}

Anderson's orthogonality catastrophe (AOC) is an intrinsic effect in
many-body fermionic systems. It arises when a system reacts to a sudden perturbation. For instance, one may think of a sudden X-ray excitation of a core electron in an atom, leaving behind a hole in a core shell. In bulk metals the AOC manifests itself in the asymptotic vanishing
\begin{equation}
	\label{catastrophe}
 	\boldsymbol\langle \Phi^N_{L},\Psi^N_{L}\boldsymbol\rangle \sim L^{-\gamma/2}
\end{equation}
of the overlap of the $N$-body ground states $\Phi^N_{L}$ and $\Psi^N_{L}$ of a given
fermionic system in a box of length $L$ with and without a perturbation
in the thermodynamic limit $L\to\infty$, $N\to\infty$, $N/L^{d} \to \text{const.} >0$. Here, $d\in\N$ is the spatial dimension.

The AOC leads to physically important effects, like Fermi-edge
singularities in the X-ray edge problem \cite{PhysRev.178.1097,RevModPhys.62.929}. It has proved to be an extremely robust
phenomenon with consequences reaching far beyond single-impurity
problems, and it continues to attract attention in the physics
literature. Recent studies
\cite{PhysRevB.72.125301,PhysRevLett.106.107402,PhysRevLett.108.190601,PhysRevB.85.155413}
considered this effect in optical absorption or emission involving a single quantum-dot
level hybridising with a Fermi sea. The absorption or emission of a photon induces a
sudden local perturbation in the Fermi sea with consequences similar to that in the
classical X-ray edge problem. Other manifestations in mesoscopic systems of current interest, such as graphene, can be found, e.g., in \cite{PhysRevB.72.035310,PhysRevB.76.115407,PhysRevB.82.125312}.

P.\ W.\ Anderson was the first to explain the behaviour
\eqref{catastrophe} in the late 1960ies. He considered a non-interacting Fermi gas in three dimensions and its perturbation by a compactly supported single-particle potential.
In this setting he used Hadamard's inequality to estimate the Slater determinant of the two ground states from above -- see also Lemma~\ref{lemmaoverlap} -- and a subsequent informal computation, which led to
\begin{equation}
	\label{catas-bound}
	\boldsymbol\langle \Phi^N_{L},\Psi^N_{L}\boldsymbol\rangle = \Oh(L^{-\gamma/2})
\end{equation}
in the thermodynamic limit \cite{PhysRevLett.18.1049}. Furthermore, Anderson expressed the decay exponent $\gamma$ in terms of the (single-particle) scattering phases associated with the perturbation.
Later on in the same year, he found a way \cite{PhysRev.164.352} to
circumvent Hadamard's inequality and arrived at the asymptotics
\eqref{catastrophe} with an exponent $\gamma$ bigger than that in \eqref{catas-bound}. After some controversies about the correctness of interchanging limits, the asymptotics of \cite{PhysRev.164.352} was confirmed in an adiabatic approach \cite{RiSi71,Ham71,KaYo78}.
Both this asymptotics and the bound \eqref{catas-bound} from \cite{PhysRevLett.18.1049} are
now taken for granted in the physics literature as a fundamental property of Fermi gases.

Only very little is known about AOC from a rigorous mathematical point of view.
The adiabatic approach to AOC was revisited by Otte \cite{1087.47015},
who rigorously derives a limit expression for the overlap in terms of
the solution of a Wiener-Hopf equation, thereby clarifying a
discussion on the correctness of limits in \cite{RiSi71, Ham71}, see also
\cite{1045.81584} for related work. Unfortunately, this does not
allow  the thermodynamic limit to be controlled, which would be necessary for proving
\eqref{catastrophe}.
Likewise, the upper bound \eqref{catas-bound} awaits a sound
mathematical treatment. This is the goal of the present paper. We will
prove \eqref{catas-bound} with the same decay exponent $\gamma$ as in
\cite{PhysRevLett.18.1049}, but valid in greater generality. The
recent work \cite{KuOtSp13} treats the one-dimensional case, see
Remark~\ref{remark:otte:first} below.

It is an open problem to prove the asymptotics \eqref{catastrophe} for
any $d\in\N$. There are even no known lower bounds on $\boldsymbol\langle
\Phi^N_{L},\Psi^N_{L}\boldsymbol\rangle$ except for $d=1$ \cite{KuOtSp13}.
A related problem,
recently solved in \cite{FrLeLeSe11}, concerns an effective estimate of the minimum change in total energy of the (infinite-volume) Fermi gas when a local, one-body potential is added to the kinetic energy.

The plan of this paper is as follows. We formulate our results in the next section. Sections~\ref{sec:proof-main} to~\ref{sec:proofError2} contain the proof of the main results, Theorems~\ref{thm:main} and~\ref{thm:main}'.
In the Appendix we prove Theorem~\ref{prop:scattering} and Corollary~\ref{corollary:anderson}, which relate  the diagonal of the Lebesgue density of a spectral correlation measure with the scattering matrix and the cross section. This yields a scattering-theoretic interpretation of our decay exponent $\gamma$ and shows that it coincides with that of \cite{PhysRevLett.18.1049}.

%
\section{Main results}
\label{sec:main}

We consider a pair of one-particle
Schr\"odinger operators $H_L:=-\Delta_L+ V_0$ and $H_L' := H_L + V$, which are self-adjoint and
densely defined in the Hilbert space $L^2(\Lambda_L)$, where $\Lambda_L
:= L\cdot\Lambda_1$ is obtained from scaling some fixed bounded domain $\Lambda_1\subseteq\R^d$,
which contains the origin, by a factor $L > 0$. The negative Laplacian
 $-\Delta_L$ is supplied with Dirichlet boundary conditions on $\Lambda_L$.
The background potential $V_0$ and the perturbation $V$ act as multiplication operators on
$L^2(\Lambda_L)$. They correspond to real-valued functions on $\R^{d}$ (denoted by the same letter) with the properties
\begin{equation}
	\label{def:potential}
	\tag{\textbf{A}}
  \begin{aligned}
  & \max\{V_{0},0\}\in K^d_\loc(\R^d),\
  \max\{-V_{0},0\}\in K^{d}(\R^d),\\
  &V \in L^{\infty}(\R^{d}), V \ge 0,\ \supp(V) \subseteq\Lambda_1
  \ \textnormal{compact}.
  \end{aligned}
\end{equation}
Here, we have written  $ K^{d}(\R^d)$ and $K^d_\loc(\R^d)$
for the Kato class and the local Kato class, respectively \cite{MR670130}.

We denote the self-adjoint, infinite-volume operators on $L^{2}(\R^{d})$,
corresponding to $H_{L}$ and $H_{L}'$, by $H$ and $H'$. Birman's theorem
\cite[Thm.\ XI.10]{MR529429}, together with \cite[Thm.\ B.9.1]{MR670130},
guarantees the existence and completeness of the
wave operators for the pair $H$, $H'$. In particular, the perturbation
$V$ does not change the absolutely continuous spectrum, i.e.
\begin{equation}
	\label{acac}
 	\sigma_{\ac}(H) = \sigma_{\ac}(H').
\end{equation}

Assumption \eqref{def:potential} ensures that the one-particle operators $H_L$ and $H_L'$ in finite volume are bounded from below and have purely discrete spectrum (which follows, e.g., from the fact that the semigroup operators $\exp\{-t H_{L}^{(\prime)}\}$ are trace class \cite[Thm.~6.1]{MR1756112} for each $t>0$). We write $\lambda_1^L\le\lambda_2^L\le\dotsb$ and
$\mu_1^L\le\mu_2^L\le\dotsb$
for their non-decreasing sequences of eigenvalues, counting multiplicities, and $(\varphi_j^L)_{j\in\N}$ and
$(\psi_k^L)_{k\in\N}$ for the corresponding
sequences of normalised eigenfunctions with an arbitrary choice of basis vectors in any
eigenspace of dimension greater than one.

The induced non-interacting $N$-particle Schr\"odinger operators
$\mathbold{H}_L$ and $\mathbold{H}_L'$ in finite volume act on the totally antisymmetric subspace $\bigwedge_{j=1}^N L^2(\Lambda_L)$ of the $N$-fold tensor product space and are given by
\begin{equation}
  \mathbold{H}^{(\prime)}_L
  :=
  \sum_{j=1}^N \one\otimes\dotsm\otimes \one \otimes H_L^{(\prime)} \otimes \one\otimes\dotsm \otimes \one,
\end{equation}
where the index $j$ determines the position of $H_L^{(\prime)}$ in the
$N$-fold tensor product of operators and $N\in\N$. The corresponding
ground states are the totally antisymmetrised products
\begin{equation}
   \Phi^N_L
  :=
  \frac{1}{\sqrt{N!}}\varphi_1^L\wedge\dotsm\wedge\varphi_N^L,
  \qquad
  \Psi^N_L :=
  \frac{1}{\sqrt{N!}}\psi_1^L\wedge\dotsm\wedge\psi_N^L.
\end{equation}
In order to avoid ambiguities from possibly degenerate eigenspaces and to realise a
given Fermi energy $E\in\R$ in the thermodynamic limit, we choose the number of particles as
\begin{equation} \label{eq:N}
  N = N_L(E) := \# \{\, j\in\N\st \lambda_j^L \le E \,\} \in\N_{0}.
\end{equation}
This choice turns out to be particularly simple from a technical point
of view.\footnote{See Lemma~\ref{lemma:ssf}(i). In the case of
Lemma~\ref{lemma:ssf}(ii), other choices of $N$, which lead to the same Fermi
energy, can easily be handled, too.}

We will be interested in the ground-state overlap
\begin{equation}
	\SO_L(E) := \boldsymbol{\Big\<}\Phi_L^{N_L(E)}, \Psi_L^{N_L(E)}\boldsymbol{\Big\>}_{N_L{(E)}}
   = \det\Bigparens{\<\varphi_j^L, \psi_k^L\>}_{j,k=1,\dotsc,N_L(E)},
  \label{def:overlap}
\end{equation}
asymptotically as $L\to\infty$.  Here, $\boldsymbol\<\dotid,\dotid\boldsymbol\>_{N}$ stands for the scalar product on the $N$-fermion space $\bigwedge_{j=1}^N L^2(\Lambda_L)$, where $N \in\N$,
and
 $\<\dotid,\dotid\>$ for the one on the single-particle space $L^{2}(\Lambda_{L})$. If $N_{L}(E)=0$, we set $\SO_{L}(E) := 1$.

\begin{remark}
	By our choice \eqref{eq:N} the particle density $\varrho$ of the two non-interacting fermion systems in the thermodynamic limit equals the integrated density of states
	\begin{equation}
		\label{rho-density}
 		\varrho = \lim_{L\to\infty} \frac{N_{L}(E)}{L^{d} \abs{\Lambda_{1}}}
	\end{equation}
	of the single-particle Schr\"odinger operator $H$ (or equivalently that of $H'$), provided the limit exists.	Here, $\abs{\dotid}$ denotes the Lebesgue measure on $\R^{d}$. For example, the limit \eqref{rho-density} exists if $V_{0}$ is periodic or vanishes at infinity. If the limit
	\eqref{rho-density} does not exist, then there must be more than one accumulation point because
	$\limsup_{L\to\infty}N_{L}(E)/L^{d} < \infty$ for every $E\in\R$ due to assumptions \eqref{def:potential}. But even in this case it makes still sense to study the asymptotic behaviour of the overlap $\SO_{L}(E)$ as $L\to\infty$.
\end{remark}

\noindent
The main result of this paper is an upper bound on the ground-state overlap $\SO_{L}(E)$
for large $L$. Throughout we use the convention $\ln 0 := -\infty$.

\begin{theorem} \label{thm:main}
  Assume conditions \eqref{def:potential} and let $(L_n)_{n\in\N} \subset\R_{\ge 0}$
  be a sequence of increasing lengths with $L_n\upto\infty$.
	Then there exists
	a subsequence $(L_{n_k})_{k\in\N}$ and a Lebesgue null
  set $\Nn \subset \R$ of exceptional Fermi energies	such that for every
  $E\in\R\setminus\Nn$ the ground-state overlap \eqref{def:overlap} obeys
  \begin{equation}
    \limsup_{k\to\infty}\frac{\ln\abs{\SO_{L_{n_k}}(E)}}{\ln L_{n_{k}}} \le - \frac{\gamma(E)}{2}
  \end{equation}
  with some decay exponent $\gamma(E) \ge 0$.
\end{theorem}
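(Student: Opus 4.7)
My approach follows Anderson's original strategy: first reduce the overlap to a spectral trace via Hadamard's inequality, then extract a logarithmic lower bound on that trace using scattering-theoretic input. Applied to the columns of the $N_L(E)\times N_L(E)$ matrix $\bigl(\<\varphi_j^L,\psi_k^L\>\bigr)$, Hadamard's inequality (Lemma~\ref{lemmaoverlap}) gives
\[
 |\SO_L(E)|^2 \le \prod_{k=1}^{N_L(E)} \|P_L(E)\psi_k^L\|^2,
\]
where $P_L(E)$ is the spectral projection of $H_L$ onto $(-\infty,E]$. Writing $\|P_L(E)\psi_k^L\|^2 = 1-\|(\one-P_L(E))\psi_k^L\|^2$ and using $\ln(1-x)\le -x$ yield the master estimate
\[
 -2\ln|\SO_L(E)| \ge \tr\bigl(P'_L\,(\one-P_L(E))\bigr) = \tfrac{1}{2}\,\|P_L(E)-P'_L\|_{\HS}^{2},
\]
with $P'_L$ the rank-$N_L(E)$ projection onto the first $N_L(E)$ eigenstates of $H'_L$; the last equality holds because both projections have the same rank $N_L(E)$.

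Because $V\ge 0$ is compactly supported, the min--max principle gives $\mu_j^L\ge\lambda_j^L$, so $P'_L$ differs from the spectral projection $\chi_{(-\infty,E]}(H'_L)$ only by a finite-rank correction whose rank equals $N_L(E)-N'_L(E)$; under \eqref{def:potential} this is $O(1)$ uniformly in $L$. The task therefore reduces to showing
\[
 \liminf_{k\to\infty} \frac{1}{\ln L_{n_k}}\,\bigl\|\chi_{(-\infty,E]}(H_{L_{n_k}}) - \chi_{(-\infty,E]}(H'_{L_{n_k}})\bigr\|_{\HS}^{2} \ge 2\gamma(E)
\]
for almost every $E$ along some subsequence. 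For any smooth compactly supported $f$, the operator $f(H'_L)-f(H_L)$ is trace class with trace norm bounded uniformly in $L$; hence any divergence of the Hilbert--Schmidt norm must originate from the discontinuity of the sharp cutoff at the Fermi energy. I would split $\chi_{(-\infty,E]}$ into a smooth part plus a sharp residual, represent the residual via Stone's formula or a Krein-type resolvent formula, and use the spectral correlation measure of the infinite-volume pair $(H,H')$ (together with Theorem~\ref{prop:scattering}) to extract the precise logarithmic rate $\gamma(E)$.

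The subsequence and exceptional null set enter at the identification of $\gamma$. The relevant finite-volume spectral correlation measures associated to $(H_L,H'_L)$ are only uniformly locally finite in $L$, not convergent, so a Helly-type selection yields weak-$*$ convergence along some subsequence to a limit measure $\mu$. The diagonal of the Lebesgue density of $\mu$ defines $\gamma(E)$ and exists off a Lebesgue null set $\Nn$ by Lebesgue differentiation; at regular energies the logarithmic lower bound can be read off by comparison with $\mu$. Corollary~\ref{corollary:anderson} then identifies $\gamma(E)$ with Anderson's scattering-phase expression.

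The main obstacle is the Hilbert--Schmidt lower bound. Smooth functional calculus and Birman's theorem readily provide upper bounds on $f(H'_L)-f(H_L)$, but a lower bound on the sharp-cutoff HS norm requires quantitative control of resolvent differences of $H_L$ and $H'_L$ in an $L$-dependent spectral window around $E$, uniform with respect to the Dirichlet boundary effects at $\partial\Lambda_L$. Showing that the contribution of the compactly supported potential $V$ sitting in $\Lambda_1\subset\Lambda_L$ dominates boundary oscillations and produces \emph{precisely} the logarithmic rate $\gamma(E)$, rather than merely an upper bound, is the technical heart of the argument.
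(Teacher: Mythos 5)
Your opening reduction is correct and agrees with the paper's Lemma~\ref{lemmaoverlap}: applying Hadamard to the columns of the overlap matrix and then Parseval plus $\ln(1-x)\le -x$ gives precisely $-2\ln|\SO_L(E)|\ge\I_L(E)$, and your observation that $\I_L(E)=\tfrac12\|P_L-P'_L\|_{\HS}^2$ for the two rank-$N_L(E)$ projections is a valid (and slightly different-looking, though equivalent) way to package the Anderson integral. From there, however, there are two serious problems.

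First, your claim that the rank of $P'_L-\chi_{(-\infty,E]}(H'_L)$, i.e.\ the finite-volume spectral shift $\xi_L(E)=N_L(E)-N'_L(E)$, is $\Oh(1)$ uniformly in $L$ under assumption \eqref{def:potential} is \emph{false} in the multi-dimensional continuum setting. This is exactly the difficulty the paper emphasises: there are no known a.e.-bounds on $\xi_L$ uniform in $L$ for $d\ge 2$ (see the remarks around Lemma~\ref{lemma:ssf} and the cited pathology of Kirsch), and the paper passes to a subsequence precisely to control $\xi_L/\ln L\to0$ via weak $L^1$-convergence of the spectral shift function. Your proposal places the subsequence elsewhere (a Helly selection for the finite-volume spectral correlation measures), but this is misdirected: in the paper these measures actually do converge to the infinite-volume $\mu$ (Lemma~\ref{Lemma:Error}, via Helffer--Sj\"ostrand plus Combes--Thomas), and the subsequence is needed only for the spectral-shift term. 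So your argument silently assumes away the very point that forces the passage to a subsequence.

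Second, the technical heart — a logarithmic lower bound $\|\chi_{(-\infty,E]}(H_L)-\chi_{(-\infty,E]}(H'_L)\|_{\HS}^2 \ge 2\gamma(E)\ln L + \oh(\ln L)$ — is only sketched, not proven, and the sketch differs from what the paper does. You propose to split the sharp cutoff into a smooth part and a residual and invoke Stone's formula; the paper instead rewrites the fixed-energy Anderson integral as $\int (y-x)^{-2}\,\d\mu_L(x,y)$ over $(-\infty,E]\times(E,\infty)$, inserts the integral representation $x^{-2}=\int_0^\infty t\,\e^{-tx}\,\dt$, truncates the $t$-integral at $L^a$, and then carries out a multi-step error analysis: replacing $\mu_L$ by $\mu$ (Helffer--Sj\"ostrand calculus + geometric resolvent + Combes--Thomas), removing $L$-dependent smooth cutoffs near $E$ (Lemma~\ref{Lemma:Error2}), and finally an approximate-delta argument identifying $\gamma(E)=\tr(P_E\varPi_E)$ via the Birman--\`Entina limits. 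Your proposal names some of these ingredients but doesn't supply the quantitative estimates; in particular, nothing in your sketch produces the sharp $\ln L$ coefficient, which is where all the work lies. Also note that Theorem~\ref{prop:scattering} is only an interpretation of $\gamma$ valid for $V_0=0$; it is not an input to the proof of Theorem~\ref{thm:main}, so leaning on it to ``extract the precise logarithmic rate'' is a confusion of direction.
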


\begin{remarks}
\item
	The proof of Theorem~\ref{thm:main} follows from Lemma~\ref{lemmaoverlap} and Theorem~\ref{thm:AI} in the next section.
	In fact, we prove the slightly stronger statement
	\begin{equation} \label{eq:strongerstatement}
    \abs{\SO_{L_{n_k}}(E)}
   	 \le \exp\big[-\tfrac{a}{2}\,\gamma(E)\ln L_{n_k} + \oh(\ln L_{n_k})\big]
	  = L_{n_k}^ {-a\gamma(E)/2+\oh(1)}
  \end{equation}
  as $k\to\infty$ for every $0<a<1$ with an $a$-dependent error term.
\item
	Of course, Theorem~\ref{thm:main} is only interesting if the decay
	exponent $\gamma(E)$ is strictly positive. It emerges as the diagonal value of the
	Lebesgue density
	\begin{equation}
 		\gamma(E) =
                \lim_{\varepsilon\to 0} \frac{1}{\varepsilon^2}
                \mu_\ac\Bigparens{[E-\varepsilon/2,E+\varepsilon/2]^2}
                = \frac{\d\mu_{\ac}(E,E')}{\d (E,E')}\bigg|_{E'=E}
	\end{equation}
  of a spectral correlation measure, which is defined by
  \begin{equation}
    \mu_\ac(B\times B')  :=  \tr\Big(\sqrt{V}\1_{B}(H_\ac) V\1_{B'}(H_\ac') \sqrt{V}\Big),
		\qquad B, B' \in\Borel(\R),
	\end{equation}
        where $\1_{B}$ stands for the indicator function of a set $B$. We refer to
	Definition~\ref{def:central} and Lemma~\ref{lemma:lpdiagonal}
	below for further notations and details.
 	In particular, we have $\gamma(E) =0$, whenever $E\notin\sigma_{\ac}(H)$.
	We refer to
	Theorem~\ref{prop:scattering} and Corollary~\ref{corollary:anderson}
	for a scattering-theoretic interpretation of $\gamma(E)$.
\item We expect the result of Theorem~\ref{thm:main} to hold true also
      for sign-indefinite non-compactly supported perturbations $V$
	which decay sufficiently fast at infinity.
\item
	Anderson \cite{PhysRevLett.18.1049} treats the special case $d = 3$, $V_{0}=0$ and $V$
	spherically symmetric and argues that $\SO_{L}(E) = \Oh(L^{-\gamma(E)/2})$ as $L\to\infty$
	for $E>0$ with the same decay exponent $\gamma(E)$ as  in this paper.
	Thus, our theorem reproduces and generalises Anderson's informal computation.
	We note that there is a factor of 2 missing in the final result (7) in \cite{PhysRevLett.18.1049},
	which was apparently forgotten.
\item \label{remark:otte:first} The only other mathematical work dealing with AOC is the
      preprint \cite{KuOtSp13}. There, the special case $d=1$ and
      $V_0=0$ is treated. Moreover,
	the perturbation $V$ needs to be small in a certain sense.
	But it is not required to be non-negative, nor to be of
	compact support -- sufficiently fast decay is enough. In this
	context, \cite{KuOtSp13} prove a bound like
	\eqref{eq:strongerstatement} with $a=1$, and with the same decay
	exponent~$\gamma$ as in this paper. They also provide a lower bound on
	$\SO_L(E)$ with a smaller decay exponent \cite[Cor.~5.6]{KuOtSp13}.
\item \label{no-subseq}
  The reason for passing to a subsequence $(L_{n_{k}})_{k\in\N}$ in Theorem~\ref{thm:main}
  originates from Lemma~\ref{lemma:ssf} below. What stands behind it is the lack of
  known a.e.-bounds on the
  finite-volume spectral shift function for the pair of operators $H_{L}, H_{L}'$, which hold
  uniformly in the limit $L\to\infty$. This unfortunate fact has been noticed many times
  in the literature, see e.g.\ \cite{MR2596053}, and the pathological behaviour of the spectral shift function found in
  \cite{MR908658} illustrates that this is a delicate issue.
  However, in certain special situations such a.e.-bounds are known, and our result can be
  strengthened. More precisely, we have
\end{remarks}

\noindent
  \textbf{Theorem~\ref{thm:main}'.}~ \emph{Assume the situation of Theorem~\ref{thm:main} with $d=1$,
  or replace the perturbation potential $V$ in Theorem~\ref{thm:main} by a finite-rank operator
  $V= \sum_{\nu=1}^{n} \langle\phi_{\nu}, \dotid\rangle \, \phi_{\nu}$
	with compactly supported $\phi_{\nu} \in L^{2}(\R^{d})$ for $\nu=1,\ldots,n$,
  or consider the lattice problem on $\Z^d$ corresponding to the situation in Theorem~\ref{thm:main}.
 Then the ground-state overlap \eqref{def:overlap} obeys
  \begin{equation}
    \limsup_{L\to\infty} \frac{\ln\abs{\SO_{L}(E)}}{\ln L} \le - \frac{\gamma(E)}{2}
  \end{equation}
  with some decay exponent $\gamma(E) \ge 0$ for Lebesgue-a.e.\ $E\in\R$.
  }
\medskip

\noindent
Next we turn to the already mentioned interpretation of the decay exponent in terms of quantities from scattering theory. Such a relation between the density of a spectral correlation measure and the scattering matrix or cross section may be of independent interest. In our case, this relation reveals non-trivial scattering as a mechanism leading to AOC.

\begin{theorem} \label{prop:scattering}
  Assume \eqref{def:potential} with $V_0 = 0$. Then the decay exponent $\gamma(E)$
  in Theorem~\ref{thm:main} reflects the amount of scattering caused by the perturbation and is given by
  \begin{equation}
  	\label{gamma-s-mat}
    \gamma(E) = \frac{E^{(d-1)/2}}{(2\pi)^{d+1}}
    \int_{\mathbb{S}^{d-1}}\!\d\varOmega(\omega)\;
    \sigma(E,\omega)
    = (2\pi)^{-2} \,\norm{S(E) - \one}_\HS^2
	\end{equation}
	for Lebesgue-a.e.\ $E \ge0$ and $\gamma(E) = 0$ for $E < 0$.
  Here, $\sigma(E,\omega)$ stands for the total scattering cross-section
  for the pair of operators $H, H'$
  on the energy shell corresponding to~$E$ 	with incident direction $\omega\in\mathbb{S}^{d-1}$
  and $\d\varOmega$ is the Lebesgue measure on the unit sphere $\mathbb{S}^{d-1}$ in $\R^{d}$.
  On the right-hand side $S(E): L^{2}(\mathbb{S}^{d-1}) \rightarrow
  L^{2}(\mathbb{S}^{d-1})$ denotes the scattering matrix  and
  $\norm{\dotid}_\HS$ the Hilbert-Schmidt norm.
\end{theorem}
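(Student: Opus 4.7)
The plan is to diagonalise the absolutely continuous parts of $H$ and $H'$ simultaneously through stationary scattering theory, read off the diagonal of the Lebesgue density of $\mu_{\ac}$ in these coordinates, and recognise it as the Hilbert--Schmidt norm of the on-shell $T$-matrix. Since $V_{0}=0$, the free Hamiltonian $H = -\Delta$ has purely absolutely continuous spectrum $[0,\infty)$ and is diagonalised by the Fourier transform. A passage to spherical coordinates in momentum space furnishes the \emph{trace on the energy shell} $\Gamma_{0}(E)\colon L^{2}(\R^{d}) \to L^{2}(\mathbb{S}^{d-1})$, normalised so that $\d E_{H}/\d E = \Gamma_{0}(E)^{\ast}\Gamma_{0}(E)$ as an operator-valued distribution. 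Birman's theorem, recalled in the Introduction, yields existence and completeness of the wave operators $W_{\pm}$ for the pair $H,H'$, and hence $\Gamma(E) := \Gamma_{0}(E)\,W_{-}^{\ast}$ provides the analogous diagonalisation $\d E_{H',\ac}/\d E = \Gamma(E)^{\ast}\Gamma(E)$.

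Next, I would invoke the limiting absorption principle for both $H$ and $H'$ (standard Agmon--Kato--Kuroda theory for bounded, compactly supported $V$) to obtain operator-norm continuity of the sandwiched boundary resolvents $\sqrt{V}\bigparens{H^{(\prime)}-E\mp\i 0}^{-1}\sqrt{V}$ for Lebesgue-a.e.\ $E>0$. Consequently the operator-valued measures $B\mapsto \sqrt{V}\,E_{H^{(\prime)},\ac}(B)\sqrt{V}$ admit continuous trace-class densities
\begin{equation}
m(E) := \sqrt{V}\,\Gamma_{0}(E)^{\ast}\Gamma_{0}(E)\,\sqrt{V},\qquad m'(E) := \sqrt{V}\,\Gamma(E)^{\ast}\Gamma(E)\,\sqrt{V},
\end{equation}
and the scalar correlation measure $\mu_{\ac}$ of Remark~(ii) acquires the joint Lebesgue density $\tr\bigparens{m(E)\,m'(E')}$ at $(E,E')$. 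Cyclicity of the trace together with $V = \sqrt V \sqrt V$ then rewrites the diagonal value as
\begin{equation}
\gamma(E) \;=\; \tr\bigparens{m(E)\,m'(E)} \;=\; \bignorm{\Gamma_{0}(E)\,V W_{-}\,\Gamma_{0}(E)^{\ast}}_{\HS}^{2} \;=\; \bignorm{T(E)}_{\HS}^{2},
\end{equation}
where $T(E) := \Gamma_{0}(E)\,V W_{-}\,\Gamma_{0}(E)^{\ast}$ is the on-shell $T$-matrix acting on $L^{2}(\mathbb{S}^{d-1})$.

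The rest is standard stationary scattering theory. The Birman--Kato representation $S(E) = \one - 2\pi\i\, T(E)$ under the above normalisation of $\Gamma_{0}(E)$ immediately gives $\bignorm{S(E)-\one}_{\HS}^{2} = (2\pi)^{2}\,\gamma(E)$, which is the second equality in \eqref{gamma-s-mat}. For the first, I would expand $\bignorm{T(E)}_{\HS}^{2}$ as a double integral over $\mathbb{S}^{d-1}\times\mathbb{S}^{d-1}$ of the modulus squared of the integral kernel of $T(E)$; this kernel equals, up to the standard dimensional factor $(2\pi)^{-(d+1)/2}E^{(d-1)/4}$, the physical scattering amplitude $f(E,\omega',\omega)$, whose squared modulus integrated over the outgoing direction $\omega'$ yields the total cross section $\sigma(E,\omega)$ with incident direction $\omega$. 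A Fubini rearrangement then produces the first equality. For $E<0$ we have $\gamma(E)=0$ directly from Remark~(ii), since $\sigma_{\ac}(H)=[0,\infty)$.

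The main technical hurdle is the pointwise identification $\gamma(E) = \tr\bigparens{m(E)m'(E)}$ at a fixed energy. It requires \emph{joint} continuity of the Lebesgue density of $\mu_{\ac}$, not merely separate continuity of $m$ and $m'$, so that the pointwise value on the diagonal actually coincides with the Radon--Nikodym derivative there. The compact support of $V$ supplies the necessary uniform Kato-smoothness of $\sqrt V$ relative to both $H$ and $H'$ on compact energy intervals in $(0,\infty)$, away from an unavoidable null set absorbing thresholds and any embedded eigenvalues of $H'$.
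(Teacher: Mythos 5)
Your proposal follows the same conceptual route as the paper—express $\gamma(E)$ as the squared Hilbert--Schmidt norm of the on-shell $T$-matrix and convert to the cross section and $S$-matrix—but the technical execution differs. The paper writes out the explicit integral kernels of $P_E^\varepsilon$ and $\varPi_E^\varepsilon$ using plane waves and the Ikebe--Povzner generalised eigenfunctions $\psi(\cdot;\omega,\lambda)$, inserts them into $\varepsilon^{-2}\tr(\sqrt{V}P_E^\varepsilon V\varPi_E^\varepsilon\sqrt{V})$, and performs the $\varepsilon\to 0$ limit directly; the continuity in $\lambda$ of the eigenfunctions (on compact sets, since $\supp V$ is compact) is what justifies picking out the diagonal value. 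You instead package the same information abstractly: energy-shell trace $\Gamma_0(E)$ with $\d E_H/\d E = \Gamma_0(E)^*\Gamma_0(E)$, the intertwining $\Gamma(E)=\Gamma_0(E)W_-^*$, the densities $m(E),m'(E)$, cyclicity of the trace yielding $\tr(m(E)m'(E)) = \|\Gamma_0(E)VW_-\Gamma_0(E)^*\|_{\HS}^2$, and the Birman--Kato formula $S(E)=\one-2\pi\i\,T(E)$. Your version is cleaner operator-theoretically and closer in spirit to the comment in the paper's Remark that an approach via \cite[\S 7]{BiEn67e} might extend to general $V_0$; the paper's kernel computation is more elementary and makes the crucial continuity input explicit. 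The one place where your sketch is thin is exactly the point you flag: you \emph{assert} that the sandwiched spectral measures have continuous trace-class densities from LAP, but standard Agmon--Kato--Kuroda theory gives operator-norm boundary values, not trace-class ones. Upgrading to trace-class continuity is where the compact support of $V$ and the explicit kernel structure (which the paper exploits directly) actually enter; a reference or short argument for this step would close the gap. Also note that $T(E)=\Gamma_0(E)VW_-\Gamma_0(E)^*$ is to be read in the stationary sense, via the Hilbert--Schmidt operator $\Gamma_0(E)\sqrt{V}$, which is legitimate here again because $\supp V$ is compact.
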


\begin{remarks}
\item	For convenience of the reader, we give a proof of the theorem
     	in the Appendix using generalised eigenfunctions.
\item 	We refer to
	\cite{MR1774673} for precise definitions of the scattering-theoretic quantities.
	We suspect that the theorem remains true for general Kato decomposable background
	potentials $V_{0}$, and also under the conditions of Theorem~\ref{thm:main}'. In fact, the relations in
	\cite[\S7]{BiEn67e}, which do not rely on generalised eigenfunctions, seem to indicate this.
\end{remarks}

\noindent
In order to see that our findings agree with those of Anderson \cite{PhysRevLett.18.1049},
we further specialise to $d=3$ dimensions and a spherically symmetric perturbation $V$.
\begin{corollary} \label{corollary:anderson}
  Let $d=3$. Assume \eqref{def:potential} with $V_0 = 0$ and $V$ spherically symmetric.
	Then the decay exponent $\gamma(E)$ in Theorem~\ref{thm:main} is given by
  \begin{equation}
    \gamma(E) = \frac{1}{\pi^{2}}\, \sum_{\ell=0}^\infty (2\ell + 1)\big(\sin\delta_\ell(E)\big)^2
  \end{equation}
  for Lebesgue-a.e.\ $E \ge0$ and $\gamma(E) = 0$ for $E < 0$.
  Here, $\delta_\ell(E)$, $\ell\in\N_{0}$, are the scattering phases.
\end{corollary}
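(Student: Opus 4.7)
The plan is to deduce the corollary directly from Theorem~\ref{prop:scattering}: since $V_0=0$ the hypotheses of that theorem are in force, so it suffices to compute $\norm[\HS]{S(E)-\one}^{2}$ for the scattering matrix $S(E)\from L^{2}(\mathbb{S}^{2})\to L^{2}(\mathbb{S}^{2})$ under the additional assumption that $V$ is spherically symmetric. I would first invoke the partial-wave decomposition of $L^{2}(\mathbb{S}^{2})$ into the orthogonal direct sum of the $(2\ell+1)$-dimensional spaces $\mathcal{Y}_{\ell}$ of spherical harmonics of degree $\ell\in\N_{0}$.

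Because $V$ is spherically symmetric, $H'=-\Delta+V$ commutes with the natural unitary action of $\mathrm{SO}(3)$ on $L^{2}(\R^{3})$, and so do both wave operators for the pair $H,H'$. Hence $S(E)$ commutes with the induced representation of $\mathrm{SO}(3)$ on $L^{2}(\mathbb{S}^{2})$, and Schur's lemma forces it to act on each irreducible component $\mathcal{Y}_{\ell}$ as a scalar; by definition of the scattering phases that scalar equals $\e^{2\i\delta_{\ell}(E)}$. Diagonality in the partial-wave basis then gives
\begin{equation*}
  \bignorm{S(E)-\one}_{\HS}^{2}
  =\sum_{\ell=0}^{\infty}(2\ell+1)\,\bigabs{\e^{2\i\delta_{\ell}(E)}-1}^{2}
  =4\sum_{\ell=0}^{\infty}(2\ell+1)\,\bigparens{\sin\delta_{\ell}(E)}^{2},
\end{equation*}
and substituting this into the second equality of \eqref{gamma-s-mat} in Theorem~\ref{prop:scattering} yields the claimed formula for Lebesgue-a.e.\ $E\ge 0$, while the statement for $E<0$ is already contained in Theorem~\ref{prop:scattering}.

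There is essentially no obstacle beyond the bookkeeping: the two facts one must justify -- the rotational invariance of $S(E)$ and the identification of its eigenvalues on $\mathcal{Y}_{\ell}$ with $\e^{2\i\delta_{\ell}(E)}$ -- are classical consequences of stationary scattering theory on $\R^{3}$ for bounded, compactly supported, spherically symmetric perturbations, which is well within the framework of \cite{MR1774673}. The corollary is thus a direct reduction from the abstract Hilbert-Schmidt formula of Theorem~\ref{prop:scattering} to the familiar partial-wave sum.
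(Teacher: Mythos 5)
Your argument is correct, but it is a genuinely different reduction from the one in the paper. You work with the second equality in \eqref{gamma-s-mat}, namely $\gamma(E)=(2\pi)^{-2}\norm{S(E)-\one}_{\HS}^{2}$, and then exploit rotational invariance plus Schur's lemma to diagonalise $S(E)$ on the isotypic components $\mathcal{Y}_{\ell}\subset L^{2}(\mathbb{S}^{2})$ with eigenvalues $\e^{2\i\delta_{\ell}(E)}$; the identity $\abs{\e^{2\i\delta}-1}^{2}=4\sin^{2}\delta$ then gives the sum. The paper instead uses the \emph{first} equality in \eqref{gamma-s-mat}, the one involving the angle-averaged total cross-section $\sigma(E,\omega)$, notes that spherical symmetry makes $\sigma(E,\omega)=\sigma(E)$ independent of $\omega$, and then quotes the classical partial-wave expansion of $\sigma(E)$ in terms of the amplitudes $f_{\ell}(E)$ together with the relation $f_{\ell}(E)=E^{-1/2}\e^{\i\delta_{\ell}(E)}\sin\delta_{\ell}(E)$ from Reed--Simon to land on the same formula. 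The two routes are both one-paragraph reductions from Theorem~\ref{prop:scattering}; yours is slightly more structural (representation theory of $\mathrm{SO}(3)$ acting on the $S$-matrix), whereas the paper's stays closer to the scattering-amplitude formalism already introduced in the proof of Theorem~\ref{prop:scattering} and can simply cite two displayed equations from Reed--Simon. Either way, the content is the same textbook partial-wave decomposition.
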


\begin{remarks}
\item The proof of Corollary~\ref{corollary:anderson} will also be given in the
Appendix.
\item We refer to  \cite[XI.8.C]{MR529429} for a definition of the scattering phases.
\end{remarks}

%
\section{Proof of Theorems \ref{thm:main} and \ref{thm:main}'} \label{sec:proof-main}

We start by estimating the ground-state overlap in the same way as in the first step of
\cite{PhysRevLett.18.1049}.

\begin{lemma}\label{lemmaoverlap}
  For every length $L > 0$ and every Fermi energy $E\in\R$ we define the \emph{Anderson integral}
	\begin{align}
		\label{eq:AndInt}
  	\I_L(E) := {} & \sum_{j=1}^{N_L(E)} \sum_{k=N_L(E)+1}^\infty \bigabs{\<\varphi_j^L,\psi_k^L\>}^2
	\end{align}
  and obtain the estimate
	\begin{equation}
		\abs{\SO_L(E)} \le \exp[-\tfrac{1}{2}\I_L(E)].
	\end{equation}
\end{lemma}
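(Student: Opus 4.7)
The plan is to apply Hadamard's inequality to the determinant \eqref{def:overlap}, convert the resulting product over rows into a product of norm-squared deficits using Parseval's identity, and then linearise via the elementary estimate $1-x\le\e^{-x}$.

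Set $N:=N_L(E)$ and consider the $N\times N$ Gram-type matrix $A:=\bigparens{\<\varphi_j^L,\psi_k^L\>}_{j,k=1,\dotsc,N}$. Writing this in terms of its rows $r_j\in\C^{N}$ with entries $(r_j)_k=\<\varphi_j^L,\psi_k^L\>$, Hadamard's inequality yields
\begin{equation}
 \abs{\SO_L(E)}^2 \;=\; \abs{\det A}^2 \;\le\; \prod_{j=1}^{N}\norm{r_j}_{\C^{N}}^{2}
 \;=\; \prod_{j=1}^{N}\sum_{k=1}^{N}\bigabs{\<\varphi_j^L,\psi_k^L\>}^{2}.
\end{equation}

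Next I would exploit the fact that $(\psi_k^L)_{k\in\N}$ is an orthonormal basis of $L^{2}(\Lambda_L)$, because $H_L'$ has purely discrete spectrum under assumption \eqref{def:potential}. Parseval's identity gives $\sum_{k=1}^{\infty}\abs{\<\varphi_j^L,\psi_k^L\>}^{2}=\norm{\varphi_j^L}^{2}=1$, so that the truncated row sums can be rewritten as
\begin{equation}
 \sum_{k=1}^{N}\bigabs{\<\varphi_j^L,\psi_k^L\>}^{2} \;=\; 1-a_j, \qquad a_j\;:=\;\sum_{k=N+1}^{\infty}\bigabs{\<\varphi_j^L,\psi_k^L\>}^{2}\;\in\;[0,1].
\end{equation}
Plugging this into the Hadamard bound and using $1-x\le\e^{-x}$ for all $x\in[0,1]$ gives
\begin{equation}
 \abs{\SO_L(E)}^{2} \;\le\; \prod_{j=1}^{N}(1-a_j) \;\le\; \exp\Bigparens{-\sum_{j=1}^{N}a_j} \;=\; \exp\bigparens{-\I_L(E)},
\end{equation}
and taking square roots yields the claim. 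The degenerate case $N_L(E)=0$ is handled by the conventions $\SO_L(E)=1$ and (via the empty sum) $\I_L(E)=0$.

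There is no real obstacle in this argument beyond ensuring that the switch from the finite row-sum to the Parseval identity is justified, which requires the completeness of $(\psi_k^L)$ on $L^{2}(\Lambda_L)$; this is already guaranteed by the discreteness statement recorded after assumption \eqref{def:potential}. The bound is, as Anderson observed, the source of the factor $\tfrac12$ in the exponent of \eqref{catas-bound}, and it is the step that converts a determinantal quantity into the off-diagonal overlap sum $\I_L(E)$ that the subsequent sections of the paper estimate from below.
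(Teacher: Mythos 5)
Your proof is correct and follows essentially the same route as the paper's: Hadamard's inequality for the determinant, Parseval's identity to convert the truncated row-norm into $1-a_j$, and the elementary bound $1-x\le\e^{-x}$ (which the paper phrases equivalently as $\ln(1+x)\le x$ after taking logarithms). The only differences are cosmetic — you square the overlap and take square roots at the end, the paper takes logarithms midway — so there is nothing substantive to distinguish.
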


\begin{proof}
  The assertion is true by definition if $N_L(E) = 0$. For $N_L(E)\ge 1$,
  we use Hadamard's inequality
  \begin{equation}
    \abs{\det M}\le \prod_{j=1}^n \abs{m_j}_2
  \end{equation}
  for an $n\times n$-matrix $M$ given by its column
  vectors $m_1,\dotsc,m_n\in\C^n$, where $\abs{\dotid}_2$ denotes the
  Euclidean norm. This gives
  \begin{equation}
    |\SO_L(E)| = \Big| \det\bigparens{\<\varphi_j^L, \psi_k^L\>}_{j,k=1,\dotsc,N_L(E)} \Big|
    \le \prod_{j=1}^{N_{L}(E)} \biggparens{\sum_{k=1}^{N_{L}(E)}
    		\bigabs{\<\varphi_j^{L},\psi_k^{L}\>}^2}^{\frac{1}{2}}
 	\end{equation}
  and therefore
	\begin{equation}
    \ln\abs{\SO_L(E)} \le \frac{1}{2} \sum_{j=1}^{N_{L}(E)} \ln
      \biggparens{\sum_{k=1}^{N_{L}(E)} \bigabs{\<\varphi_j^{L},\psi_k^{L}\>}^2}.
 \end{equation}
 Parseval's identity $\sum_{k=1}^{N_{L}(E)} \bigabs{\<\varphi_j^{L},\psi_k^{L}\>}^2
   = 1 - \sum_{k=N_{L}(E)+1}^\infty \bigabs{\<\varphi_j^{L},\psi_k^{L}\>}^2$ and the elementary
 inequality $\ln(1 + x) \le x$ for $x \ge -1$ then yield the claim of
 the lemma.
\end{proof}

\noindent
In order to define the decay exponent $\gamma(E)$ of the main theorem,
we need a convergence result due to Birman and \`Entina.
We write $H_\ac^{(\prime)}$ to denote the restriction of the operator
$H^{(\prime)}$ to its absolutely continuous subspace.

\begin{proposition}[\protect{\cite[Lemma 4.3]{BiEn67e}}]
	\label{prop:Birman}
	Assume the situation of Theorem~\ref{thm:main} or Theorem~\ref{thm:main}'.
  For $E\in \R$ and $\varepsilon >0$ we define the spectral
  projections
    \begin{align}
      P_E^\varepsilon :=
      \1_{]E-\varepsilon/2,E+\varepsilon/2[}(H_\ac),
      \qquad
      \varPi_{E}^\varepsilon :=
      \1_{]E-\varepsilon/2,E+\varepsilon/2[}(H'_\ac).
  \end{align}
  Then there exists a Lebesgue null set  $\Nn_0\subset\R$ such that the limits
  \begin{align} \label{eq:BirmanPPi}
    P_E :=   \lim_{\varepsilon\to 0}
    \frac{1}{\varepsilon}\sqrt{V}P_E^\varepsilon\sqrt{V},
    \qquad
    \varPi_{E} :=
    \lim_{\varepsilon\to 0}
    \frac{1}{\varepsilon}\sqrt{V} \varPi_{E}^\varepsilon\sqrt{V}
  \end{align}
  exist in trace class for all $E\in\R\setminus\Nn_0$ and define non-negative trace class operators
      $P_E$ and $\varPi_E$.
\end{proposition}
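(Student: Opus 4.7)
The plan is to realise the prescription $B \mapsto \sqrt{V}\,\1_{B}(H_{\ac})\sqrt{V}$ as a positive, trace-class-valued spectral measure on $\R$ which is absolutely continuous with respect to Lebesgue measure, and then to extract $P_{E}$ as its Radon--Nikodym derivative via a vector-valued Lebesgue differentiation theorem. The operator $\varPi_{E}$ is constructed identically with $H'_{\ac}$ in place of $H_{\ac}$, and the exceptional null set $\Nn_{0}$ is the union of the two Lebesgue null sets on which the respective differentiation fails. The starting point is to verify that for every bounded Borel set $B \subset \R$ the operator $\sqrt{V}\,\1_{B}(H_{\ac})\sqrt{V}$ is trace class. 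Under assumption \eqref{def:potential} the heat kernel $p_{t}(\dotid,\dotid)$ of $\e^{-tH}$ is locally bounded for every $t>0$, so the compactness of $\supp V$ together with $V \in L^{\infty}$ yields
\begin{equation}
  \bignorm{\sqrt{V}\,\e^{-tH}}_{\HS}^{2} = \int_{\Lambda_{1}}\d x\; V(x)\, p_{2t}(x,x) < \infty.
\end{equation}
For bounded $B$ the operator $\e^{tH}\1_{B}(H)$ is bounded by spectral calculus, so $\sqrt{V}\,\1_{B}(H)$ is Hilbert--Schmidt. The projection identity $\1_{B}(H_{\ac})^{2} = \1_{B}(H_{\ac})$ then gives
\begin{equation}
  \sqrt{V}\,\1_{B}(H_{\ac})\sqrt{V} = \bigparens{\sqrt{V}\,\1_{B}(H_{\ac})}\bigparens{\sqrt{V}\,\1_{B}(H_{\ac})}^{*},
\end{equation}
which is non-negative and trace class as a product of a Hilbert--Schmidt operator with its adjoint.

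The next step is to show that the resulting operator-valued set function $\mathcal{M}(B) := \sqrt{V}\,\1_{B}(H_{\ac})\sqrt{V}$ is $\sigma$-additive in trace norm on bounded Borel sets, and absolutely continuous with respect to Lebesgue measure. For a disjoint union $B = \bigsqcup_{n}B_{n}$, monotone strong convergence of the partial sums together with convergence of the associated scalar measure $\tr\mathcal{M}(\dotid)$ (which is $\sigma$-additive by the spectral theorem) upgrades strong convergence to trace-norm convergence via Grümm's theorem for sequences of positive trace-class operators with convergent traces. Absolute continuity is immediate, since $\1_{B}(H_{\ac}) = 0$ whenever $|B|=0$, by definition of the absolutely continuous subspace.

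Finally, the trace class $\mathcal{B}_{1}$ is a separable dual space and therefore possesses the Radon--Nikodym property, so $\mathcal{M}$ admits a Bochner-integrable density $E \mapsto P_{E}$, non-negative and trace class for a.e.\ $E$, with $\mathcal{M}(B) = \int_{B} P_{E}\,\d E$ in trace norm. The Lebesgue differentiation theorem for Bochner-integrable functions on $\R$ then yields
\begin{equation}
  P_{E} = \lim_{\varepsilon\to 0}\tfrac{1}{\varepsilon}\,\sqrt{V}\,P_{E}^{\varepsilon}\,\sqrt{V}
\end{equation}
in trace norm, for Lebesgue-a.e.\ $E\in\R$, as required. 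The main technical obstacle is performing all of this analysis in trace norm rather than in a weaker operator topology; this rests on Grümm's theorem for the $\sigma$-additivity step and on the RNP of the trace class for the differentiation step. Once in place, the argument is abstract and applies verbatim to the pair $\sqrt{V}, H'_{\ac}$ to produce $\varPi_{E}$.
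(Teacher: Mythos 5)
The paper does not prove this proposition; it simply cites it as Lemma~4.3 of Birman--\`Entina \cite{BiEn67e}, whose original argument works concretely with the resolvent and a Hilbert--Schmidt factorisation. Your proof takes a genuinely different, more abstract route: you recast $B\mapsto\sqrt{V}\,\1_{B}(H_{\ac})\sqrt{V}$ as a positive, trace-class-valued, Lebesgue-absolutely-continuous measure, invoke the Radon--Nikodym property of the separable dual space $\cB_{1}\cong\mathcal{K}^{*}$ to obtain a Bochner-integrable density, and then apply the vector-valued Lebesgue differentiation theorem to recover $P_{E}$ and $\varPi_{E}$ at a.e.\ $E$. All of the ingredients are correct: the Hilbert--Schmidt bound on $\sqrt{V}\,\e^{-tH}$ holds under assumption~\eqref{def:potential} (and trivially in the finite-rank and lattice cases of Theorem~\ref{thm:main}'), the factorisation $\sqrt{V}\,\1_{B}(H_{\ac})\sqrt{V}=AA^{*}$ with $A$ Hilbert--Schmidt gives non-negativity and trace class, absolute continuity is immediate from the definition of $H_{\ac}$, and separable duals do have the RNP. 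One small simplification you could make: Gr\"umm's theorem is overkill for the $\sigma$-additivity step, since for positive trace-class operators $0\le S_{N}\le\mathcal{M}(B)$ one has $\norm{\mathcal{M}(B)-S_{N}}_{1}=\tr\mathcal{M}(B)-\tr S_{N}\to 0$ directly. The trade-off between your route and the cited one is the usual one: the RNP/Bochner-differentiation argument is shorter and structurally transparent, but leans on nontrivial Banach-space machinery, whereas the Birman--\`Entina proof is self-contained within operator-theoretic scattering theory and produces the density more explicitly (which is what one actually needs in the Appendix when relating $\gamma(E)$ to the scattering matrix).
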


The above proposition guarantees that the quantities introduced in the
first part of the next definition are well-defined.

\begin{definition}
	\label{def:central}
  \begin{nummer}
  \item For $E,E' \in \R\setminus\Nn_0$ we introduce
 		\begin{equation}
 				\gamma_{1}(E) := \tr P_{E},\qquad
				\gamma_{2}(E) := \tr \varPi_{E}
                \end{equation}
        as well as the two-dimensional quantity
		\begin{equation}
				\gamma^{(2)}(E,E') := \tr (P_{E} \varPi_{E'})
                \end{equation}
        and its value on the diagonal
                \begin{equation}
				\gamma(E) := \gamma^{(2)}(E,E) \label{gamma-def}.
		\end{equation}
	  This gives rise to functions $\gamma_{1},\gamma_{2},\gamma : \R \rightarrow \R$ and
	  $\gamma^{(2)}: \R^{2} \rightarrow\R$ by setting them to zero
	  if the limits in \eqref{eq:BirmanPPi} do not exist.
  \item
  	The Borel measures $\mu_{\ac}^1$ and $\mu_{\ac}^2$ on $\R$ are
    defined by
    \begin{equation} \label{eq:1dmeasures}
      \mu_{\ac}^1(B) := \tr(\sqrt{V}\1_B(H_\ac)\sqrt{V}),
      \qquad
      \mu_{\ac}^2(B) := \tr(\sqrt{V}\1_B(H_\ac')\sqrt{V})
    \end{equation}
    for $B\in\Borel(\R)$.
  \item   
		The \emph{spectral correlation measure} $\mu_\ac$ on $\R^2$
    is defined by
    \begin{equation}
      \label{ac-measure}
      \mu_\ac(B\times B') :=
      \tr\Big(\sqrt{V}\1_B(H_\ac)V\1_{B'}(H_\ac') \sqrt{V}\Big)
    \end{equation}
    for $B,B'\in\Borel(\R)$.
  \end{nummer}
\end{definition}

\begin{remark}
	\label{1D-density}
  Both expressions in \eqref{eq:1dmeasures} define Borel
  measures by
  \cite[Thm.~B.9.2]{MR670130}. They are absolutely continuous
  with respect to Lebesgue measure on $\R$ and the functions $\gamma_{1}$, resp.\ $\gamma_{2}$
  are representatives of the Lebesgue densities of $\mu_{\mathrm{ac}}^{(1)}$,
  resp.\ $\mu_{\mathrm{ac}}^{(2)}$. In particular, $\gamma_{1}, \gamma_{2} \in L^{1}_{\loc}(\R)$.
\end{remark}

A corresponding statement for the two-dimensional measure $\mu_{\mathrm{ac}}$ is contained in

\begin{lemma}
	\label{lemma:lpdiagonal}
  The Borel measure $\mu_{\ac}$ is well-defined and
  absolutely continuous with respect to Lebesgue measure on $\R^{2}$.
  The function $\gamma^{(2)} \in L^{1}_{\loc}(\R^{2})$ is a representative of its
  	Lebesgue density and obeys
    \begin{equation} \label{eq:gamma}
      \gamma^{(2)}(E,E') \le \gamma_1(E) \gamma_2(E')
      \qquad \text{for all $E,E'\in\R$.}
    \end{equation}
\end{lemma}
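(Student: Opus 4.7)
The plan is to first establish $\mu_\ac$ as a genuine Borel measure on $\R^2$ via comparison with the product $\mu_\ac^1 \otimes \mu_\ac^2$, then to identify $\gamma^{(2)}$ as a representative of its Lebesgue density using the Birman--\`Entina convergence of Proposition \ref{prop:Birman} together with Lebesgue differentiation, and finally to establish the pointwise bound \eqref{eq:gamma}.

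For the first step, cyclicity of the trace and the factorisation $V = \sqrt V \sqrt V$ allow me to rewrite
\[
M(B\times B') := \tr\Big(\sqrt V \1_B(H_\ac) V \1_{B'}(H'_\ac) \sqrt V\Big) = \tr(A_B B_{B'}),
\]
with $A_B := \sqrt V \1_B(H_\ac)\sqrt V$ and $B_{B'} := \sqrt V \1_{B'}(H'_\ac)\sqrt V$, both nonnegative trace class with traces $\mu_\ac^1(B)$ and $\mu_\ac^2(B')$, respectively. Since $\|T\|_\infty \le \|T\|_1$ for any nonnegative trace class $T$, cyclicity yields the crucial rectangle estimate
\[
0 \le M(B\times B') = \tr\bigparens{B_{B'}^{1/2} A_B B_{B'}^{1/2}} \le \|A_B\|_\infty\,\tr B_{B'} \le \mu_\ac^1(B)\,\mu_\ac^2(B').
\]
Countable additivity of the operator-valued set functions $B\mapsto A_B$ and $B'\mapsto B_{B'}$ in trace norm, which follows from nonnegativity and monotone convergence, transfers to countable additivity of $M$ in each argument on rectangles. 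Hence $M$ is a $\sigma$-finite pre-measure on the semi-ring of rectangles, and Carath\'eodory's theorem extends it uniquely to a Borel measure $\mu_\ac$ on $\R^2$. The rectangle bound then propagates to arbitrary Borel sets: every open set is a countable disjoint union of open rectangles, and outer regularity of $M$ extends the inequality to every Borel set. Combined with the absolute continuity $\mu_\ac^{1,2} \ll {}$Lebesgue measure on $\R$ (Remark \ref{1D-density}), this gives $\mu_\ac \ll {}$Lebesgue measure on $\R^2$, and local finiteness produces a density in $L^1_\loc(\R^2)$.

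To identify $\gamma^{(2)}$ as a representative, I apply Lebesgue differentiation on centred squares: for a.e.\ $(E, E')$,
\[
\frac{\upd\mu_\ac}{\upd(E, E')}(E, E') = \lim_{\varepsilon\to 0} \tr\Bigparens{\tfrac{1}{\varepsilon}\sqrt V P^\varepsilon_E \sqrt V \cdot \tfrac{1}{\varepsilon}\sqrt V \varPi^\varepsilon_{E'}\sqrt V}.
\]
By Proposition \ref{prop:Birman}, for $E, E'\notin\Nn_0$ the two factors converge in trace norm to $P_E$ and $\varPi_{E'}$; trace-norm convergence of one factor together with uniform trace-class boundedness of the other is enough to pass to the limit under the trace, giving $\tr(P_E \varPi_{E'}) = \gamma^{(2)}(E, E')$. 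Since $(\Nn_0\times\R)\cup(\R\times\Nn_0)$ is Lebesgue-null in $\R^2$, $\gamma^{(2)}$ agrees Lebesgue-a.e.\ with the density. For the pointwise inequality, whenever $E, E'\notin\Nn_0$, cyclicity together with $\|T\|_\infty \le \|T\|_1$ for nonnegative trace class $T$ gives
\[
\gamma^{(2)}(E, E') = \tr\bigparens{P_E^{1/2}\varPi_{E'}P_E^{1/2}} \le \|\varPi_{E'}\|_\infty \tr P_E \le \tr(\varPi_{E'})\tr(P_E) = \gamma_1(E)\gamma_2(E'),
\]
and for exceptional $E$ or $E'$ the convention $\gamma^{(2)} = 0$ makes the inequality trivial, since the right-hand side is nonnegative.

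The main obstacle is the first step: lifting the rectangle-wise trace expression to an honest Borel measure on $\R^2$ requires genuine measure-theoretic care, and the argument leans crucially on the a priori domination $M \le \mu_\ac^1 \otimes \mu_\ac^2$ on rectangles, both for the $\sigma$-finiteness needed by Carath\'eodory and for propagating the absolute-continuity inequality from rectangles to all Borel sets.
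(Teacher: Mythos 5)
Your proposal is correct and takes essentially the same route as the paper: the domination estimate on rectangles (the paper derives it via the trace-ideal H\"older inequality $\tr(AB)\le\norm{A}_1\norm{B}_1$, you via $\norm{T}_\infty\le\norm{T}_1$ after cyclically moving $B_{B'}^{1/2}$), extension to a Borel measure (the paper cites a textbook uniqueness theorem, you invoke Carath\'eodory with the same a priori bound), absolute continuity by comparison with $\mu_{\ac}^1\otimes\mu_{\ac}^2$, and identification of $\gamma^{(2)}$ with the Lebesgue density by combining the differentiation theorem on centred squares with the Birman--\`Entina trace-norm limits. The only place you are more explicit than the paper is the last step: the paper simply says the ratio ``by definition'' equals $\gamma^{(2)}(E,E')$, whereas you spell out the trace-norm $\times$ uniform boundedness argument that makes $\tr(\varepsilon^{-1}\sqrt V P_E^\varepsilon\sqrt V\cdot\varepsilon^{-1}\sqrt V\varPi_{E'}^\varepsilon\sqrt V)\to\tr(P_E\varPi_{E'})$ rigorous; this is a welcome clarification, not a divergence in approach.
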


\begin{proof}[Proof of Lemma \ref{lemma:lpdiagonal}]
  Let $B,B' \in \Borel(\R)$ be bounded. Then the non-negative
      expression \eqref{ac-measure} is finite
  because of \cite[Thm.~B.9.1]{MR670130}.
  Thus, it gives rise to a uniquely defined Borel measure
  on $\R^{2}$ by standard reasoning \cite[Thm.\ 23.3]{Bau01}.

  H\"older's inequality and the norm inequality
  $\|\dotid\|_{\ell^{2}} \le \|\dotid\|_{\ell^{1}}$ for the (standard) sequence spaces
  imply the estimate
  \begin{multline}
  	\label{hoelder}
    \tr \Big(\sqrt{V}\1_B(H_\ac) V \1_{B'}(H_\ac') \sqrt{V} \Big)
    \\ \le  \tr \big(\sqrt V \1_B(H_\ac) \sqrt V\big)\,
         \tr\big(\sqrt V \1_{B'}(H_\ac')  \sqrt V\big).
  \end{multline}
	The inequality \eqref{eq:gamma} follows directly from it. In turn, \eqref{eq:gamma} and Remark~\ref{1D-density}
	imply
	$\gamma^{(2)} \in L^{1}_{\loc}(\R^{2})$.

	To show absolute continuity of $\mu_{\ac}$, we conclude from \eqref{hoelder}
	and Remark~\ref{1D-density} that
  \begin{equation}
  	\label{product-bound}
 		\mu_{\mathrm{ac}}(C) \le \int_{C}\!\d E\d E'\, \gamma_{1}(E) \gamma_{2}(E')
	\end{equation}
  holds for all product sets $C= B\times B'$ with $B,B' \in \Borel(\R)$.
  The comparison theorem \cite[Thm.~II.5.8]{Els05} extends
  \eqref{product-bound} to all $C \in \Borel(\R^{2})$. In
  particular, $\mu_{\ac}$ is absolutely continuous
  with respect to two-dimensional Lebesgue measure.

	Due to absolute continuity of $\mu_\ac$ the limit
  \begin{equation}
    \lim_{\varepsilon\to 0}
    \varepsilon^{-2} \mu_{\ac}
    \Bigparens{[E-\varepsilon/2,E+\varepsilon/2] \times [E'-\varepsilon/2,E'+\varepsilon/2]}
    =
    \frac{\d\mu_\ac(E,E')}{\d(E,E')}
  \end{equation}
  exists for Lebesgue-a.e.\ $(E,E')\in\R^2$.
 	But, by definition, the left-hand side equals $\gamma^{(2)}(E,E')$
 for all $E,E'\in\R\setminus\Nn_0$.
  \end{proof}

\begin{remark}
  We work with a particular representative of the Lebesgue density
  of $\mu_\ac$ because we are interested in diagonal values
  $\gamma^{(2)}(E,E)=\gamma(E)$ of the density.
\end{remark}

\noindent
Theorems~\ref{thm:main} and \ref{thm:main}' will follow from Lemma~\ref{lemmaoverlap} and

\begin{theorem} \label{thm:AI}
	\begin{nummer}
	\item
  	Assume conditions \eqref{def:potential} and let $(L_n)_{n\in\N} \subset\R_{\ge 0}$ be a
  	sequence of increasing lengths with $L_n\upto\infty$. 	Then there exists
		a subsequence $(L_{n_k})_{k\in\N}$ and a Lebesgue null
  	set $\Nn \subset \R$ of exceptional Fermi energies	such that for every
  	$E\in\R\setminus\Nn$ the Anderson integral \eqref{eq:AndInt} obeys
		\begin{equation}
  		\I_{L_{n_k}}(E) \ge a\gamma(E) \,\ln L_{n_k} + \oh(\ln L_{n_k}) \qquad\quad \text{as~~} k\to\infty
		\end{equation}
		for every $ 0< a < 1$ and with $\gamma(E)$ given by
		\eqref{gamma-def}. Here, the error term depends on $a$.
	\item
		Assume the situation of Theorem~\ref{thm:main}'. Then there
  	exists a Lebesgue null set $\Nn \subset \R$ of exceptional Fermi energies,
  	such that for every $E\in\R\setminus\Nn$
  	the Anderson integral \eqref{eq:AndInt} obeys
		\begin{equation}
			\label{ILnice}
  		\I_{L}(E) \ge a\gamma(E) \,\ln L + \oh(\ln L) \qquad\quad \text{as~~} L\to\infty
		\end{equation}
		for every $ 0< a < 1$ and with $\gamma(E)$ given by
		\eqref{gamma-def}. Here, the error term depends on $a$.
	\end{nummer}
\end{theorem}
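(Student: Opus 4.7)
The strategy is a Riemann-sum lower bound on $\I_L(E)$ against the finite-volume spectral correlation measure, followed by passage to the infinite-volume limit using Proposition~\ref{prop:Birman} and Lemma~\ref{lemma:lpdiagonal}. First, the intertwining $V=H_L'-H_L$ applied to the eigenvalue equations yields $\<\varphi_j^L,V\psi_k^L\>=(\mu_k^L-\lambda_j^L)\<\varphi_j^L,\psi_k^L\>$, and the min-max principle together with $V\ge 0$ guarantees $\mu_k^L\ge\lambda_k^L>\lambda_j^L$ whenever $j\le N_L(E)<k$. Rearranging and factoring $V=(\sqrt V)^2$ gives
\begin{equation*}
\I_L(E)\;=\;\sum_{j\le N_L(E)}\sum_{k>N_L(E)}\frac{\bigabs{\<\sqrt V\,\varphi_j^L,\,\sqrt V\,\psi_k^L\>}^2}{(\mu_k^L-\lambda_j^L)^2}.
\end{equation*}

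Next, fix a small $R>0$ and an integer parameter $M=M_L$. Partition $[E-R,E]=\bigcup_{u=1}^M J_u^-$ into $M$ disjoint intervals of length $R/M$, indexed so that $J_1^-$ is adjacent to $E$, and analogously partition $[E^\sharp_L,E^\sharp_L+R]=\bigcup_{v=1}^M J_v^+$, where $E^\sharp_L:=\mu_{N_L(E)+1}^L$ is the smallest eigenvalue $\mu_k^L$ contributing to the Anderson integral. Monotonicity of eigenvalues under $V\ge 0$ combined with the spectral-shift-function estimate of Lemma~\ref{lemma:ssf} shows $E^\sharp_L-E=\oh(1)$ as $L\to\infty$ for Lebesgue-a.e.\ $E$. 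For every pair $(\lambda_j^L,\mu_k^L)\in J_u^-\times J_v^+$ the separation obeys $\mu_k^L-\lambda_j^L\le(u+v)R/M$, so restricting the double sum to such pairs yields
\begin{equation*}
\I_L(E)\;\ge\;\frac{M^2}{R^2}\sum_{u,v=1}^M\frac{\mu^L(J_u^-\times J_v^+)}{(u+v)^2},\qquad \mu^L(A\times B):=\tr\bigparens{\sqrt V\,\1_A(H_L)\,V\,\1_B(H_L')\,\sqrt V}.
\end{equation*}

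To close the argument, I invoke the trace-norm convergence $\sqrt V\,\1_I(H_L)\,\sqrt V\to\sqrt V\,\1_I(H_\ac)\,\sqrt V$ for bounded intervals $I$---a consequence of the compact support of $V$ and the Birman--Kato framework underlying Proposition~\ref{prop:Birman}---which gives $\mu^L(J_u^-\times J_v^+)\to\mu_\ac(J_u^-\times J_v^+)$ as $L\to\infty$. By Lemma~\ref{lemma:lpdiagonal} the measure $\mu_\ac$ has Lebesgue density $\gamma^{(2)}$, and for every Lebesgue point $E$ of the diagonal restriction (a.e.\ $E\in\R$) we have $\mu_\ac(J_u^-\times J_v^+)\ge (\gamma(E)-\oh_R(1))(R/M)^2$ once $R$ is small. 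Combined with the elementary asymptotic $\sum_{u,v=1}^M(u+v)^{-2}=\ln M+\Oh(1)$, this produces $\I_L(E)\ge(\gamma(E)-\oh_R(1))\ln M+\oh(\ln L)$. Choosing $M=\lfloor L^a\rfloor$ for any fixed $0<a<1$ and letting $R\to 0$ after $L\to\infty$ gives the desired inequality $\I_L(E)\ge a\gamma(E)\ln L+\oh(\ln L)$.

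The principal obstacle is the \emph{uniformity} of the convergence $\mu^L\to\mu_\ac$ across the $M^2\sim L^{2a}$ cells, whose widths $R/M=RL^{-a}$ shrink with $L$, with a cumulative error that must remain $\oh(\ln L)$. Quantitative control is mediated by the spectral shift function, and in the generality of Part~(i) no a.e.-bound on the finite-volume SSF uniform in $L$ is known (cf.\ Remark~\ref{no-subseq} and the pathology in \cite{MR908658}). This forces the extraction of a subsequence $(L_{n_k})$---realized via weak-$*$ precompactness of the SSFs in a suitable $L^1$-type space---and the exclusion of an $E$-null set $\Nn\subset\R$. The specialized hypotheses of Part~(ii)---dimension one, finite-rank $V$, or the lattice setting---admit pointwise a.e.\ SSF bounds uniform in $L$ (by Birman's inequality, the rank bound, or direct lattice estimates), thereby avoiding the subsequence and yielding \eqref{ILnice} for the full sequence.
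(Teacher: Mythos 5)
The proposal takes a genuinely different route from the paper (Riemann-sum discretisation against $\mu^L$ versus the paper's Laplace-transform representation $x^{-2}=\int_0^\infty \dt\, t\,\e^{-tx}$ with smooth cut-offs), and the opening steps are sound: the eigenvalue-equation identity, the restriction to pairs near $E$, and the observation that $\sum_{u,v\le M}(u+v)^{-2}\sim\ln M$ is the source of the logarithm, all parallel what the paper does in Lemmas~\ref{lemma:AIasIntegral} and~\ref{lemma:delta_convergence}. But the step that closes the argument has a genuine gap. You write that the trace-norm convergence $\sqrt V\,\1_I(H_L)\sqrt V\to\sqrt V\,\1_I(H_\ac)\sqrt V$ is ``a consequence of the compact support of $V$ and the Birman--Kato framework underlying Proposition~\ref{prop:Birman}.'' This is not so: Proposition~\ref{prop:Birman} and Birman--Kato theory concern only the fixed infinite-volume pair $(H,H')$ and say nothing about how spectral projections of the \emph{finite-volume} operator $H_L$ converge as $L\to\infty$; since $\1_I$ is discontinuous, even strong resolvent convergence $H_L\to H$ would not give this conclusion without additional input. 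Moreover, you need this convergence quantitatively and uniformly over $M^2\sim L^{2a}$ cells of width $RL^{-a}$, with cumulative error $\oh(\ln L)$. That is precisely the technical heart of the paper's proof -- all of Lemma~\ref{Lemma:Error}, via a Helffer--Sj\"ostrand representation of smooth cut-offs, the geometric resolvent equation, and Combes--Thomas estimates -- and you acknowledge it as ``the principal obstacle'' but then defer it to the spectral-shift-function subsequence extraction of Lemma~\ref{lemma:ssf}, which in the paper serves a different and narrower purpose (bounding $\F_L-\I_L$ by $\xi_L$) and does not deliver the convergence $\mu^L\to\mu$ you require.

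A secondary problem lies in anchoring the $\mu$-side mesh at $E^\sharp_L:=\mu^L_{N_L(E)+1}$. Since $J_v^+$ lies to the right of $E^\sharp_L>E$ while $J_u^-$ lies to the left of $E$, the correct bound is $\mu_k^L-\lambda_j^L\le(E^\sharp_L-E)+(u+v)R/M$; your stated bound $(u+v)R/M$ only follows if $E^\sharp_L-E=\oh(R/M)=\oh(L^{-a})$, a much stronger statement than the $\oh(1)$ you claim and one that Lemma~\ref{lemma:ssf} does not provide. It would be cleaner to anchor both meshes at $E$ and then account for the finitely many over-counted pairs ($\lambda^L_j\le E$, $\mu^L_k>E$, $k\le N_L(E)$), of which there are exactly $\xi_L(E)$; this is, in effect, the paper's reduction of $\I_L$ to the fixed-energy integral $\F_L$, and is the correct place for the SSF/subsequence argument to enter.
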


\begin{remarks}
\item Theorem~\ref{thm:AI} follows immediately from Lemma~\ref{lemma:ssf} and Theorem~\ref{thm:modAI}.
\item \label{Otte}
  We are now in a position to expand Remark~\ref{remark:otte:first} on
  \cite{KuOtSp13}. They prove the exact asymptotics
  \[
    \I_L \sim \gamma\ln L \qquad\quad\text{as~~} L\to\infty
  \]
  with the same decay exponent $\gamma$, which extends our
  Theorem~\ref{thm:AI} in their particular case. Technically, it relies on the exact knowledge of the eigenvalues and eigenfunctions of the one-dimensional Laplacian in an interval and sophisticated explicit computations.
\end{remarks}

\noindent
The next lemma estimates the error arising from a modification of the Anderson integral
so that all energy levels up to, respectively from, the Fermi energy $E$ are taken into account.
This is where the spectral shift function enters.
It is only part~\itemref{ssf} of this lemma which forces us to pick a subsequence
$(L_{n_{k}})_{k\in \N}$ of the original sequence of lengths  $(L_{n})_{n\in\N}$.

\begin{lemma} \label{lemma:ssf}
 	\begin{nummer}
	\item \label{ssf}
		Assume \eqref{def:potential} and let $(L_n)_{n\in\N} \subset\R_{\ge 0}$ be a
		sequence of increasing lengths with $L_n\upto\infty$.
		Then
		there exists a subsequence $(L_{n_k})_{k\in\N}$ 
		such that for Lebesgue-a.e.\ Fermi energy $E\in\R$
  	\begin{equation}
  		\label{AI-fixedE}
      \Bigabs{\F_{L_{n_k}}(E) - \I_{L_{n_{k}}}(E)} = \oh(\ln L_{n_{k}})
  	\end{equation}
  	as $k\to\infty$. Here,
  	\begin{equation}
 			\F_{L}(E) := \tr\Big(1_{{]-\infty,E]}}(H_L)
                                            \1_{]E,\infty[}(H_L')\Big)
		\end{equation}
  	is the \emph{fixed-energy Anderson integral}.
	\item \label{ssf-prime}
		Assume the situation of Theorem~\ref{thm:main}'. Then
		\begin{equation}
 		  \sup_{L>1}\sup_{E\in\R} \Bigabs{\F_{L}(E) - \I_{L}(E)} 	< \infty.
		\end{equation}
	\end{nummer}
\end{lemma}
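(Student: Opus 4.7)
\emph{Plan.} In both parts I reduce the estimate to a bound on a finite-volume spectral shift function for the pair $(H_L,H_L')$. Interpreting the $H,H'$ in the definition of $\F_L(E)$ as the finite-volume operators $H_L,H_L'$ (so that the $L$-dependence is meaningful) and writing $P:=\1_{]-\infty,E]}(H_L)$, $Q:=\1_{]E,\infty[}(H_L')$ together with $Q':=\sum_{k>N_L(E)}\<\psi_k^L,\dotid\>\psi_k^L$, I have $\I_L(E)=\tr(PQ'P)$ and $\F_L(E)=\tr(PQP)$. Since $V\ge0$ forces $\mu_j^L\ge\lambda_j^L$ by min-max, $M_L(E):=\#\{k:\mu_k^L\le E\}\le N_L(E)$ and therefore $Q\ge Q'$, which gives
\begin{equation*}
0\le \F_L(E)-\I_L(E)=\tr\bigparens{P(Q-Q')P}\le \tr(Q-Q')=N_L(E)-M_L(E)=:\xi_L(E),
\end{equation*}
the non-negative finite-volume spectral shift function. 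Both parts are thereby reduced to controlling $\xi_L$.

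\emph{Proof of~\itemref{ssf-prime}.} Under each hypothesis of Theorem~\ref{thm:main}', $\xi_L$ admits a bound uniform in $L$ (and in $E$, off an exceptional set). For a rank-$n$ perturbation this is the Birman--Kre\u\i n pointwise estimate $\xi_L\le n$. In the lattice setting the resolvent difference is of uniformly bounded rank, since $V$ is supported on the finitely many sites of $\Lambda_1\cap\Z^d$, and the same conclusion applies. In the one-dimensional continuum case $\xi_L$ is uniformly bounded via Kre\u\i n's trace identity together with the fact that a compactly supported potential in $d=1$ produces only finitely many bound states and a bounded total count of scattering phases, independently of the enclosing volume. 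Combining any of these pointwise bounds with the algebraic step yields $\limsup_L|\F_L(E)-\I_L(E)|<\infty$, proving~\itemref{ssf-prime}.

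\emph{Proof of~\itemref{ssf}.} Without the special structure of~\itemref{ssf-prime} one still has the uniform-in-$L$ averaged bound
\begin{equation*}
\int_I\xi_L(E)\,dE=\sum_{j:\,\lambda_j^L\in I}\bigabs{I\cap[\lambda_j^L,\mu_j^L)}\le C_I
\end{equation*}
for every bounded interval $I\subset\R$. It follows from Kre\u\i n's trace identity $\int\phi'(E)\xi_L(E)\,dE=\tr(\phi(H_L')-\phi(H_L))$ applied to a smooth cut-off $\phi$ of $\1_I$, together with a uniform-in-$L$ trace-norm estimate on $e^{-tH_L'}-e^{-tH_L}$ coming from a Feynman--Kac representation and $\supp V\subseteq\Lambda_1$. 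Markov's inequality then gives $\bigabs{\{E\in I:\xi_{L_n}(E)>\varepsilon\ln L_n\}}\le C_I/(\varepsilon\ln L_n)$. Choosing the subsequence $(L_{n_k})$ so that $\ln L_{n_k}\ge k^2$, Borel--Cantelli applied for each fixed $\varepsilon>0$ yields $\xi_{L_{n_k}}(E)\le\varepsilon\ln L_{n_k}$ for all sufficiently large $k$ and a.e.\ $E\in I$. Intersecting the full-measure sets over $\varepsilon=1/m$, $m\in\N$, and exhausting $\R$ by bounded intervals, I obtain $\xi_{L_{n_k}}(E)=\oh(\ln L_{n_k})$ for a.e.\ $E\in\R$, which with the algebraic step is \eqref{AI-fixedE}.

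\emph{Main obstacle.} The delicate ingredient is the uniform-in-$L$ averaged bound on $\xi_L$. No uniform pointwise bound is available in general (cf.\ Remark~\ref{no-subseq} and the pathologies of \cite{MR908658}), and this is exactly why the passage to a subsequence in~\itemref{ssf} cannot be avoided in this argument, whereas in the three special settings of~\itemref{ssf-prime} additional structural input replaces the averaged control by a genuine pointwise one.
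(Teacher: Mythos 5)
Your reduction to the finite-volume spectral shift function $\xi_L(E) = N_L(E) - \#\{k : \mu_k^L \le E\}$ via the algebraic bound $0\le\F_L(E)-\I_L(E)\le\xi_L(E)$ is exactly the paper's first step, and your reading of $\F_L$ as built from $H_L,H_L'$ is the intended one (cf.\ Lemma~\ref{lemma:AIasIntegral}).

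For part~\itemref{ssf-prime} the finite-rank and lattice cases match the paper (min--max gives $\xi_L\le n$ for a rank-$n$ perturbation). The one-dimensional continuum case, however, has a genuine gap. You invoke finiteness of bound states and boundedness of the scattering phase sum for a compactly supported $V$, but these are facts about the \emph{infinite-volume} pair $(H,H')$ and do not, by themselves, yield a bound on the finite-volume $\xi_L(E)$ uniformly in $L$ --- the absence of such automatic pointwise control is precisely what forces the subsequence in part~\itemref{ssf}, cf.\ the pathologies in \cite{MR908658}. The paper's one-dimensional argument is structurally different: decouple $\Lambda_1$ from $\Lambda_L\setminus\Lambda_1$ by Dirichlet--Neumann bracketing, note that introducing a boundary point is a rank-two perturbation so all eigenvalue counts shift by at most a fixed integer, and observe that the decoupled operators agree on $\Lambda_L\setminus\Lambda_1$, leaving an $L$-independent pair on $\Lambda_1$ whose SSF is trivially bounded. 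Some argument of this finite-volume nature is needed; the infinite-volume scattering input alone does not close the estimate.

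For part~\itemref{ssf} you take a route genuinely different from the paper's, and it works. The paper cites the convergence $\int_I\xi_L\,\d E\to\int_I\xi\,\d E$ from \cite{MR2596053}, deduces that $\xi_{L_n}/\ln L_n\to 0$ in $L^1(I)$, and extracts an a.e.-convergent subsequence. You instead posit a uniform averaged bound $\int_I\xi_L\le C_I$ and run Markov plus Borel--Cantelli along a subsequence with $\ln L_{n_k}\ge k^2$, intersecting over $\varepsilon=1/m$. Your version has the advantage that the subsequence is chosen once and for all; the paper's extraction nominally depends on the interval $I$, so exhausting $\R$ tacitly requires a diagonalisation that your explicit choice avoids. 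Your sketch of the averaged bound --- Kre\u\i n's trace identity with $\phi(E)=e^{-tE}$, a Feynman--Kac estimate on $\tr(e^{-tH_L}-e^{-tH_L'})$ uniform in $L$ thanks to $\supp V\subseteq\Lambda_1$, together with $\xi_L\ge0$ and the uniform lower bound on the spectra --- is plausible and would make the argument independent of \cite{MR2596053}, but it is only sketched and would need to be carried out carefully to be convincing.
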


\begin{proof}
 	Given $L>0$ and $E\in\R$, we recall from \eqref{eq:N} that, by definition,
	\begin{equation}
 		\lambda_{N_{L}(E)}^{L} \le E \quad\text{and}\qquad
 		\lambda_{N_{L}(E)+1}^{L} > E,
	\end{equation}
        where we use the convention $\lambda_0^L := -\infty$.
	This allows to rewrite the Anderson integral as
	\begin{align}
  	\I_L(E)
  	&=
  		 \sum_{j=1}^{N_{L}(E)} \sum_{k=N_{L}(E)+1}^\infty \abs{\<\varphi_j^L,\psi_k^L\>}^2 \notag\\
  	&=
  		\tr\bigg\{1_{]-\infty,E]}(H_L) \,
			\Big( \1_{]E,\infty[}(H_L')
       - \sum_{k\in \{1, \ldots, N_L{(E)}\}\,\st \mu_{k}^{L} >E} \<\psi_k^L,\dotid\>\psi_k^L \Big)\bigg\}.
	\end{align}
	The number of terms in the above $k$-sum
	\begin{align}
	  \#\Big\{k\in \{1, \ldots, N_L{(E)}\} \st \mu_k^L >E\Big\}
	  & = N_{L}(E) - \#\Big\{k\in \N \st \mu_k^L \le E\Big\} \notag\\
          & =: \xi_{L}(E)
 	\end{align}
	is precisely the value at $E$ of the (non-negative) spectral shift function
	for the pair of finite-volume operators $H_{L},H_{L}'$. Therefore we obtain
	\begin{equation}
		\label{xi-bound}
 		 0 \le  \F_{L}(E) -  \I_{L}(E)
		 \le \xi_{L}(E),
	\end{equation}
  and it remains to prove that this error is of order $\oh(\ln L)$ as $L\to\infty$.
  In the situation of \itemref{ssf-prime}, we have even
  $\sup_{L>1}\sup_{E\in\R} \xi_{L}(E) < \infty$ thanks to a
  finite-rank argument and the min-max principle.
  In order to apply this finite-rank argument in the one-dimensional
  continuum case, use Dirichlet-Neumann bracketing and the fact that
  introducing a Dirichlet or Neumann boundary point amounts to a rank-two-perturbation for the resolvents.

  In the multi-dimensional continuum situation of \itemref{ssf} no
  such uniform bounds are known -- not even bounds for a.e.\ energy. But we
  can  exploit the weak convergence \cite[Thm.~1.4]{MR2596053}
  \begin{equation}
      \lim_{L\to\infty}\int_I\;\d E\, \xi_L(E)
      =
      \int_I\;\d E\, \xi(E),
  \end{equation}
  for every bounded interval $I\subseteq\R$, where
  $\xi\in L^1_\loc(\R)$ is the spectral shift function for the pair of infinite-volume operators $H,H'$.
  Thus, given a sequence of diverging lengths $(L_n)_{n\in\N}$,
  the sequence of non-negative functions $(\xi_{L_{n}}/\ln L_{n})_{n\in\N}$ converges to
  zero in $L^{1}(I)$. Hence there exists a subsequence $(L_{n_{k}})_{k\in\N}$ such that
  $(\xi_{L_{n_{k}}}/\ln L_{n_{k}})_{k\in\N}$ converges to zero for Lebesgue-a.e.\ $E\in I$.
  The claim then follows from exhausting $\R$ by a sequence of bounded intervals $I$.
\end{proof}

\begin{theorem}
	\label{thm:modAI}
	Assume the situation of Theorem~\ref{thm:main} or Theorem~\ref{thm:main}'.
	Then there exists a Lebesgue null set
	$\mathcal{N} \subset\R$ of exceptional Fermi energies such that for every
	$E \in\R\setminus\mathcal{N}$ and every $a \in{]0,1[}$
	\begin{equation}
  	\F_{L}(E) \ge a\gamma(E) \ln L  + \oh(\ln L)
	\end{equation}
	as $L\to\infty$. Here, the error term depends on $a$.
\end{theorem}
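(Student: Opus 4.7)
The plan is to start from the algebraic identity
\[
  |\langle \varphi_j^L, \psi_k^L\rangle|^2 = \frac{|\langle \sqrt V \varphi_j^L, \sqrt V \psi_k^L\rangle|^2}{(\mu_k^L - \lambda_j^L)^2} \qquad(\lambda_j^L \ne \mu_k^L),
\]
which follows by subtracting the eigenvalue equations for $H_L$ and $H_L'$ and using $H_L' - H_L = V$, and to plug it into $\F_L(E) = \sum_{j:\lambda_j^L\le E}\sum_{k:\mu_k^L>E}|\langle \varphi_j^L, \psi_k^L\rangle|^2$. This exhibits $\F_L(E)$ as a Riemann-sum-like approximation to the logarithmically divergent integral $\iint_{\lambda\le E<\mu}(\mu-\lambda)^{-2}\,\d\mu_{\ac}(\lambda,\mu)$; the singularity at the corner $(E,E)$ is exactly the source of the predicted asymptotics $\F_L(E) \sim \gamma(E)\ln L$.

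To extract this divergence quantitatively, I would fix an auxiliary $r > 1$ (eventually $\downto 1$), a small $\delta > 0$, and a cutoff scale $\varepsilon_L := L^{-a'}$ with $a' \in \, ]a,1[$, and partition the relevant corner into dyadic cells
\[
  R^L_{m,n} := \bigparens{E - r^m\varepsilon_L,\, E - r^{m-1}\varepsilon_L} \times \bigparens{E + r^{n-1}\varepsilon_L,\, E + r^n\varepsilon_L},
\]
indexed by $1 \le m, n \le N_L := \lfloor \log_r(\delta/\varepsilon_L) \rfloor$. On each cell $(\mu - \lambda)^2 \le (r^m + r^n)^2 \varepsilon_L^2$, so discarding pairs outside $\bigcup_{m,n} R^L_{m,n}$ leaves
\[
  \F_L(E) \ge \sum_{m,n=1}^{N_L} \frac{\mu^L(R_{m,n}^L)}{(r^m+r^n)^2\,\varepsilon_L^2}, \qquad \mu^L(A\times B) := \tr\bigbrackets{\sqrt V \1_A(H_L)\, V\, \1_B(H_L')\,\sqrt V}.
\]

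The main obstacle, and the point at which the separate assumptions of Theorems~\ref{thm:main} and~\ref{thm:main}' enter, is the quantitative replacement $\mu^L(R_{m,n}^L) = (1+\oh(1))\gamma(E)\,|R_{m,n}^L|$ uniformly in $(m,n)$ in the allowed range. I expect this to rest on a trace-norm approximation of sandwiched spectral projections of the form $\Norm[1]{\sqrt V[\1_I(H_L)-\1_I(H)]\sqrt V} = \oh(|I|)$ for intervals $I$ of length down to roughly $L^{-1}$, exploiting that $\sqrt V$ has compact support fixed inside $\Lambda_L$ so that the finite- and infinite-volume resolvents can be compared by a boundary contribution that decays in $L$ as soon as $\operatorname{Im}z\gtrsim L^{-1}$. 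Combined with Proposition~\ref{prop:Birman} and continuity of $\gamma^{(2)}$ at the diagonal (available off a Lebesgue null set in view of Lemma~\ref{lemma:lpdiagonal}), this yields $\mu^L(R_{m,n}^L) = (1+\oh(1))\gamma(E)(r-1)^2 r^{m+n-2}\varepsilon_L^2$; in the specialised settings of Theorem~\ref{thm:main}' the additional structure (dimension one, finite rank, or lattice) is expected to make this estimate routine, while in the general setting of Theorem~\ref{thm:main} it is the delicate multi-dimensional step.

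Substituting back and setting $k := n - m$ collapses the double sum to $\sum_{|k|<N_L}(N_L - |k|)\,f(r^k)$ with $f(x) := x/(1+x)^2$. Recognising $\sum_{k\in\Z} f(r^k) \approx (\ln r)^{-1}$ as a Riemann sum for $\int_0^\infty(1+y)^{-2}\,\d y = 1$ isolates the leading term $N_L/\ln r$; multiplying by the prefactor $(r-1)^2/r^2$ coming from $|R_{m,n}^L|$ and using $N_L = a'\ln L/\ln r + \Oh(1)$ gives
\[
  \F_L(E) \ge \gamma(E)\cdot \frac{(r-1)^2}{r^2(\ln r)^2}\cdot a'\ln L\cdot (1 + \oh(1)).
\]
Since $(r-1)/\ln r \to 1$ as $r \downto 1$, the $r$-dependent prefactor tends to $1$; choosing first $r$ close enough to $1$ and then $a' \in \,]a,1[$ close enough to $1$ produces the stated bound $\F_L(E) \ge a\gamma(E)\ln L + \oh(\ln L)$ for every $a \in \,]0,1[$. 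The exceptional null set $\Nn$ is taken as the union of $\Nn_0$ from Proposition~\ref{prop:Birman} with the set where $\gamma^{(2)}$ fails to be continuous at the diagonal.
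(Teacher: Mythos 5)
Your opening steps are right and coincide with the paper's: the algebraic identity for $\abs{\<\varphi_j^L,\psi_k^L\>}^2$, the rewriting of $\F_L(E)$ as $\int (y-x)^{-2}\,\d\mu_L(x,y)$ over the corner $]-\infty,E]\times{]E,\infty[}$, and the observation that the logarithmic divergence is produced by the singularity at $(E,E)$ exactly match Lemma~\ref{lemma:AIasIntegral}. But the dyadic-cell strategy you then propose leaves the two genuinely hard steps unresolved, and in a form where I do not believe they can be carried out.

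First, the finite-to-infinite volume comparison you ask for, $\Norm[1]{\sqrt V\bigbrackets{\1_I(H_L)-\1_I(H)}\sqrt V}=\oh(\abs{I})$ for sharp spectral projections on intervals as short as $L^{-1}$, is not established in the paper and would be considerably harder than what is actually proved. The Helffer-Sj\"ostrand plus Combes-Thomas machinery that makes the comparison work in Lemma~\ref{Lemma:Error} buys $L$-dependent decay precisely from the smoothness of the almost-analytic extension of the test function. Sharp indicators on intervals of length $L^{-1}$ have no such extension with usable bounds, which is exactly why Definition~\ref{sc-def} introduces smooth cutoffs $\chi_L^\pm$ that steepen only at scale $L^{-b}$ with $b<1$, and why Lemma~\ref{lemma:AIasIntegral} uses the Laplace-transform identity $x^{-2}=\int_0^\infty\!\dt\,t\,\e^{-tx}$ so that the measure $\mu_L-\mu$ is only ever tested against the smooth family $\e^{-t(y-x)}\chi_L^-(x)\chi_L^+(y)$. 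Your dyadic rectangles $R_{m,n}^L$ at scale $\varepsilon_L=L^{-a'}$ demand a comparison for sharp indicators at resolution nearly $L^{-1}$, which sidesteps the very construction that makes the argument go through.

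Second, the passage from $\mu_\ac(R_{m,n}^L)$ to $\gamma(E)\abs{R_{m,n}^L}$ rests on ``continuity of $\gamma^{(2)}$ at the diagonal, available off a null set by Lemma~\ref{lemma:lpdiagonal}.'' That lemma produces only an $L^1_\loc$ representative of the density; it gives no continuity, and it certainly does not give a Lebesgue-differentiation statement uniformly over your cells. For $m\neq n$ the rectangles $R_{m,n}^L$ are highly eccentric and are not centered at $(E,E)$, so they do not form a regular family shrinking to that point, and the standard differentiation theorem does not apply to them. The paper circumvents this entirely: Lemma~\ref{lemma:delta_convergence} proves, via trace-class convergence and a one-dimensional approximate-identity argument, that $\lim_{t\to\infty}\int t^2\e^{-t(y-x)}\1_{[-E_0,E]}(x)\1_{[E,E_0]}(y)\,\d\mu_\ac(x,y)=\gamma(E)$ for a.e.\ $E$, and what is needed there is only that $E$ be a Lebesgue point of the one-dimensional densities $\gamma_1,\gamma_2$ (together with Proposition~\ref{prop:Birman}), not any regularity of the two-dimensional $\gamma^{(2)}$.

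In short, your plan correctly identifies the objects and the mechanism, but the two estimates it calls for --- a trace-norm bound for sharp spectral windows at scale $L^{-1}$, and a uniform small-cell approximation $\mu_\ac(R_{m,n}^L)\approx\gamma(E)\abs{R_{m,n}^L}$ --- are precisely what the paper engineers around by replacing the sharp rectangles with the Laplace kernel and the $L^{-b}$-smoothed cutoffs. As stated those two steps are gaps, not routine.
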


\noindent
We will explicitly spell out the proof of Theorem~\ref{thm:modAI} for the situation of
Theorem~\ref{thm:main} only. The proof is fully analogous (and even simpler) in the remaining  situations of Theorem~\ref{thm:main}', where $V$ is a finite-rank operator or that of the lattice model.

In the first lemma which enters the proof of Theorem~\ref{thm:modAI} we rewrite the fixed-energy Anderson integral as an integral with respect to a spectral correlation measure.

\begin{lemma} \label{lemma:AIasIntegral}
	Assume \eqref{def:potential}, let $L>0$ and $E\in\R$. Then we have
  \begin{align}\label{integralausdruck}
    \F_{L}(E)
    &= \int_{]-\infty,E]\times]E,\infty[}\frac{\d\mu_L(x,y)}{(y-x)^2} \nonumber\\
    &  \ge \int_0^{L^a} \dt\; t \int_{\R^2} \d\mu_L(x,y)\; \e^{-t(y-x)}
    	\chi_L^-(x) \chi_L^+(y),
  \end{align}
  where the \emph{(finite-volume) spectral correlation measure} $\mu_{L}$ on $\R^{2}$ is
  uniquely defined by $\mu_L(B\times B') := \tr\big(\sqrt{V} \1_B(H_L) V \1_{B'}(H_L') \sqrt{V} \big)$
	for $B,B'\in\Borel(\R)$. The parameter $a>0$ and the functions $\chi_L^\pm \in L^{\infty}(\R)$
	are arbitrary subject to
	\begin{equation}
		\label{chi-basic}
 		0 \le \chi_L^+ \le \1_{]E,\infty[}\quad \text{and}\qquad 0 \le \chi_L^- \le \1_{]-\infty, E]}.
	\end{equation}
\end{lemma}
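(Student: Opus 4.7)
The plan is to unpack the measure $\mu_L$ via the spectral decompositions of $H_L$ and $H_L'$, apply a Hellmann--Feynman-style identity coming from $H_L' - H_L = V$ to produce the denominator $(y-x)^2$, and then exploit the elementary Laplace representation $(y-x)^{-2} = \int_0^\infty \dt\, t\, \e^{-t(y-x)}$, valid for $y>x$. Cyclicity of the trace and the factorisation $V = \sqrt V\,\sqrt V$ rewrite the measure as
\begin{equation*}
 \mu_L(B\times B') = \tr\bigparens{\1_B(H_L)\, V\, \1_{B'}(H_L')\, V}
 = \sum_{j:\lambda_j^L\in B}\,\sum_{k:\mu_k^L\in B'} \bigabs{\<\varphi_j^L, V\psi_k^L\>}^2,
\end{equation*}
a finite pure-point measure on every bounded Borel rectangle, which extends uniquely to a Borel measure on $\R^2$ by Carath\'eodory extension.

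Next, the commutator identity
\begin{equation*}
 (\mu_k^L - \lambda_j^L)\<\varphi_j^L,\psi_k^L\> = \<\varphi_j^L, (H_L' - H_L)\psi_k^L\> = \<\varphi_j^L, V\psi_k^L\>
\end{equation*}
gives $\bigabs{\<\varphi_j^L,\psi_k^L\>}^2 = \bigabs{\<\varphi_j^L, V\psi_k^L\>}^2/(\mu_k^L - \lambda_j^L)^2$ whenever the denominator does not vanish. Inserting this into the spectral expansion
\begin{equation*}
 \F_L(E) = \sum_{j:\lambda_j^L \le E}\,\sum_{k:\mu_k^L > E} \bigabs{\<\varphi_j^L, \psi_k^L\>}^2
\end{equation*}
is legitimate since $\lambda_j^L \le E < \mu_k^L$ on the relevant index set, and produces the first equality in the lemma.

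For the lower bound, I would substitute the Laplace representation into this integral and swap the order of integration by Fubini--Tonelli (all integrands are non-negative). Two successive monotonicity arguments then finish the job: replacing the indicator $\1_{]-\infty,E]}(x)\,\1_{]E,\infty[}(y)$ by the pointwise smaller $\chi_L^-(x)\,\chi_L^+(y) \ge 0$, and truncating the $t$-integral from $]0,\infty[$ to $]0,L^a[$, both only decrease the integral. Finiteness issues are trivial here because $\1_{]-\infty,E]}(H_L)$ has finite rank in finite volume, so all traces in sight are finite sums. The only genuinely substantive step is the Hellmann--Feynman manipulation that introduces the denominator $(y-x)^2$; once this generates the spectral correlation measure, everything else is routine rearrangement.
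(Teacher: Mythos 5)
Your proposal is correct and follows essentially the same route as the paper: the eigenvalue (Hellmann--Feynman) identity produces the $(y-x)^{-2}$ denominator and identifies $\F_L(E)$ as a $\mu_L$-integral, after which the Laplace representation $(y-x)^{-2}=\int_0^\infty \dt\, t\, \e^{-t(y-x)}$, Tonelli, the truncation of the $t$-integral, and the pointwise bounds on $\chi_L^{\pm}$ give the inequality. You make explicit a couple of steps the paper leaves implicit (cyclicity of the trace and the Carath\'eodory extension), but the substance is identical.
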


\begin{remark}
We have suppressed the dependence of $\chi_{L}^{\pm}$ on the Fermi energy $E$ and will impose further properties on these functions in Definition~\ref{sc-def} below.
\end{remark}

\begin{proof}[Proof of Lemma~\ref{lemma:AIasIntegral}]
   	The eigenvalue equations imply
    \begin{equation}
      \lambda_j^L\<\varphi_j^L, \psi_k^L\>
      =
      \<H_L \varphi_j^L, \psi_k^L\>
      =
      \mu_k^L\<\varphi_j^L, \psi_k^L\> - \<\varphi_j^L, V\psi_k^L\>
    \end{equation}
    from which we obtain the identity
    \begin{equation}
      \abs{\<\varphi_j^L, \psi_k^L\>}^2
      =
      \frac{\abs{\<\varphi_j^L, V\psi_k^L\>}^2}{(\mu_k^L - \lambda_j^L)^2} \,,
    \end{equation}
  	provided  $\lambda_j^L \neq \mu_k^L$.
    This yields
    \begin{align}
      \F_{L}(E)
      = \sum_{\substack{j\in\N\,:\\ \lambda_j^L \le E}}
      \sum_{\substack{k\in\N\,:\\ \mu_k^L > E}}
      \abs{\<\varphi_j^L, \psi_k^L\>}^2
      & =
      \sum_{\substack{j\in\N\,:\\ \lambda_j^L \le E}}
      \sum_{\substack{k\in\N\,:\\ \mu_k^L > E}}
      \frac{\abs{\<\varphi_j^L, V\psi_k^L\>}^2}{(\mu_k^L - \lambda_j^L)^2}
      \nonumber \\
      & =
      \int_{]-\infty,E]\times]E,\infty]}\frac{\d\mu_L(x,y)}{(y-x)^2}.
    \end{align}
		The inequality in \eqref{integralausdruck} follows from the integral representation
		$x^{-2} = \int_{0}^{\infty}\d t\, t \, \e^{-tx}$ for $x >0$, Fubini's theorem, from cutting
		the $t$-integral and \eqref{chi-basic}.
\end{proof}

\begin{definition}
	\label{sc-def}
	Given an exponent $b >0$, a length $L>1$, a cut-off energy $E_{0} \ge 1$ and a Fermi energy
	$E\in [-E_{0}+1,E_{0} -1]$, we say that
	$\chi_{L}^{\pm} \in C_{c}^{\infty}(\R)$
	are \emph{smooth cut-off functions}, if they obey
	\begin{equation}
		\begin{split}
  		1_{[E+ L^{-b}, E_{0}]} &\le \chi_{L}^{+} \le 1_{]E , E_{0}+1[}, \\
			1_{[-E_{0},E - L^{-b}]}  &\le \chi_{L}^{-} \le 1_{]-E_{0}-1,E [}
		\end{split}
	\end{equation}
	and if there exist $L$-independent constants $c_{\nu}
	\in \R_{>0}$, $\nu\in\N_{0}$, such that
	\begin{equation} \label{def:chi_L2}
 		\chi_L^\pm (E \pm x)\le c_{0} L^{b} \,x
	\end{equation}
	for all $x\in[0, L^{-b}[$ and
	\begin{equation} \label{def:chi_L3}
		\biggabs{\frac{\partial^\nu}{\partial x^\nu}\, \chi^\pm_L(E \pm x)}   \le
		\begin{cases} \; c_\nu L^{\nu b} & \text{for all}\;\; x\in[0, L^{-b}[ \,, \\[.5ex]
 				\; c_{\nu} & \text{otherwise}
	  \end{cases}
	\end{equation}
	for every $\nu\in\N$.

	Thus, $\chi_{L}^{+}$ equals one inside $[E +L^{-b}, E_{0}]$ and zero in ${]}-\infty, E] \cup [E_{0} +1,\infty[$.
	Whereas its smooth growth in $[E, E+ L^{-b}]$ gets steeper with increasing $L$, we choose its
	smooth decay in $[E_{0},E_{0}+1]$ independently of $L$. The
	properties of $\chi_{L}^{-}$ are analogous.
\end{definition}

\noindent
The next lemma allows to replace the finite-volume operators by their infinite-volume analogues.
It is the crucial step in our argument, and we defer
the proof to Sect.~\ref{Section:Error:Proof}.

\begin{lemma}\label{Lemma:Error}
 	Let $0<a <b< 1$, $L>1$ and $E_{0} \ge 1$. Pick a Fermi energy $E\in [-E_{0}+1,E_{0} -1]$
	and let $\chi_{L}^{\pm}$ be the associated smooth cut-off functions.
	Then we have
 	\begin{align}
  	\int_0^{L^a} \dt\; t\, &  \bigg( \int_{\R^2} \d\mu_L(x,y) \,
			\e^{-t(y-x)} \chi_L^-(x) \chi_L^+(y) \nonumber\\
	   & \hspace{1em} - \int_{\R^2} \d\mu(x,y)\, 	\e^{-t(y-x)} \chi_L^-(x) \chi_L^+(y) \bigg)
   	= \oh(1)
 	\end{align}
 	as $L\to\infty$, where the $\oh(1)$-term depends on $a$ and
 	$b$. Here, the \emph{(infinite-volume) spectral
 	correlation measure} $\mu$ on $\R^{2}$ is uniquely defined by
	\begin{equation}
 		\mu(B\times B') := \tr\big(\sqrt{V} 1_B(H) V 1_{B'}(H') \sqrt{V} \big)
	\end{equation}
	for $B,B'\in\Borel(\R)$.
\end{lemma}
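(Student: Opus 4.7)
The plan is to reformulate both integrals as operator traces, use a telescoping identity to reduce the comparison to a single operator difference per factor, and then exploit the distance of order $L$ between $\supp V\subseteq\Lambda_{1}$ and $\partial\Lambda_{L}$ to produce $L$-smallness. By the spectral theorem, with $f_t^L(x):=\e^{tx}\chi_L^-(x)$ and $g_t^L(y):=\e^{-ty}\chi_L^+(y)$,
\begin{equation}
  \int_{\R^{2}} \d\mu_L(x,y)\,\e^{-t(y-x)}\chi_L^-(x)\chi_L^+(y) = \tr\bigl[\sqrt V\,f_t^L(H_L)\,V\,g_t^L(H_L')\,\sqrt V\bigr],
\end{equation}
and analogously for $\mu$. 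Since $\chi_L^\pm$ are supported in $[-E_0-1,E_0+1]$, both functional-calculus operators have norm bounded by $\e^{tE_0}$. A telescoping decomposition of the difference of the two traces then reduces the problem to trace-norm estimates on $\sqrt V(f_t^L(H_L)-f_t^L(H))\sqrt V$ and its primed counterpart.

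For each such difference I would apply the Helffer--Sj\"ostrand formula to an almost analytic extension $\tilde f_t^L$ of $f_t^L$,
\begin{equation}
  f_t^L(H_L) - f_t^L(H) = \frac{1}{\pi}\int_{\C}\d^{2}z\;\bar\partial\tilde f_t^L(z)\,\bigl[(z-H_L)^{-1} - (z-H)^{-1}\bigr],
\end{equation}
controlling $|\bar\partial\tilde f_t^L(z)|$ via the derivative bounds \eqref{def:chi_L3} (which grow at most like $L^{\nu b}\e^{tE_0}$ in the $\nu$-th derivative). The sandwiched resolvent difference $\sqrt V[(z-H_L)^{-1}-(z-H)^{-1}]\sqrt V$ should admit a trace-norm bound that is exponentially small in $L$, obtained by combining the geometric resolvent identity with a Combes--Thomas bound on the resolvent kernels and the Kato-class trace ideal estimates of \cite[Thm.~B.9.1]{MR670130}, exploiting that both $\sqrt V$ factors localize into $\Lambda_{1}$ whereas the ``boundary information'' distinguishing $H_L$ from $H$ is concentrated near $\partial\Lambda_{L}$.

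Integrating the resulting estimate over $z$ and then over $t\in[0,L^a]$ is expected to yield the claimed $\oh(1)$ bound. The main obstacle is precisely to balance the three competing scales that appear: the semigroup amplification $\e^{tE_0}\le\e^{L^a E_0}$, the polynomial-in-$L$ derivative losses $L^{\nu b}$ from the increasingly steep cut-offs $\chi_L^\pm$, and the $L$-dependent gain from the resolvent difference. The precise hypothesis $0<a<b<1$ is what allows one to push the order of the almost analytic extension high enough, and to truncate the $z$-integration close enough to the real axis, so that the resolvent gain survives and dominates both the semigroup amplification and the subsequent integration in $t\,\d t$ up to $L^a$. Carrying out this book-keeping carefully --- in particular verifying that the gain really beats $\e^{L^a E_0}$ after all losses --- is the technical heart of the argument deferred to Section~\ref{Section:Error:Proof}.
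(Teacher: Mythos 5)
Your high-level strategy matches the paper's: reformulate via traces, telescope, apply Helffer--Sj\"ostrand, split the $z$-integral into a strip near the real axis (handled by crude resolvent bounds) and its complement (handled by the geometric resolvent identity plus Combes--Thomas), and finally integrate over $t$. So the framework is correct.

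However there is a genuine error in your accounting of the competing scales, and it is the one point where the proof would actually fail if your estimates were right. You claim that the functional-calculus operators have norm $\le \e^{tE_0}$ and that the ``semigroup amplification $\e^{tE_0}\le\e^{L^{a}E_0}$'' is one of the three scales that must be beaten. If that exponential loss were really present, no polynomial gain in $L$ from the Combes--Thomas bound could defeat it, and the whole argument would collapse. The point you are missing is that the exponentials and the cut-offs are paired precisely so as to produce \emph{damping}, not amplification: after centering $E=0$, $\chi_L^-$ is supported in $]-\infty,0]$, so $\e^{tx}\chi_L^-(x)\le 1$ for every $t\ge 0$; likewise $\chi_L^+$ is supported in $[0,\infty[$, so $\e^{-ty}\chi_L^+(y)\le 1$. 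Consequently $\|g_L^t\|_\infty,\|f_L^t\|_\infty\le 1$ uniformly in $t$, and when you Leibniz-differentiate, each factor of $t$ arriving from the exponential comes packaged as $t/L^{b}\le L^{a-b}\to 0$, so the $\nu$-th derivative obeys a bound of the form $L^{\nu b}\,Q(t/L^{b})\1_{[0,E_0+1]}$ with $Q$ a polynomial --- purely polynomial in $L$, with \emph{no} $\e^{tE_0}$ factor anywhere. That uniform-in-$t$ boundedness is what makes the $\int_0^{L^a}t\,\d t$ integral manageable (it contributes an $L^{a+b}$-type factor after pairing with the remaining trace, as in the paper's \eqref{Helf-Sjo5}) and what lets the polynomial Combes--Thomas gain, together with a sufficiently high-order almost-analytic extension (order $n$ chosen so that $a+2b+n(-1+b+\varepsilon)<0$), close the argument. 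In short: same route, but your ``main obstacle'' is an artifact of a sign/support oversight, and with it corrected the balancing becomes a polynomial-versus-polynomial (or polynomial-versus-exponentially-small) comparison rather than a hopeless polynomial-versus-exponential one.
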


\noindent
We recall the measure $\mu_{\ac}$ from Definition~\ref{def:central} and
the Lebesgue densities $\gamma_{1}$ and $\gamma_{2}$ of the measures $\mu_{\mathrm{ac}}^{(1)}$ and $\mu_{\mathrm{ac}}^{(2)}$, see
Remark~\ref{1D-density}. In the next lemma we estimate the error
for replacing the smooth cut-off functions in the limit expression
(more precisely, its ac-part) of the previous lemma by step functions.

\begin{lemma}\label{Lemma:Error2}
 	In addition to the hypotheses of the previous lemma, suppose that
	$E\in [-E_{0}+1,E_{0} -1]$ is a Lebesgue point of both $\gamma_{1}$ and $\gamma_{2}$.
	Then we have
 	\begin{equation}\label{lemma2error}
    \int_0^{L^a} \!\!\!\dt\, t \int_{\R^2}\!\!\d\mu_{\ac}(x,y)\, \e^{-t(y-x)}
     \Bigparens{\chi_{L}^-(x) \chi_{L}^+(y) - \1_{[-E_{0},E]}(x)\1_{[E,E_{0}]}(y)}
 		= \Oh(1)
	\end{equation}
 	as $L\to\infty$ with an $\Oh(1)$-term depending on $a$ and $b$.
\end{lemma}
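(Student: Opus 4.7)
The plan is to decompose the integration domain into regions where $\chi_L^-(x)\chi_L^+(y)$ and $\1_{[-E_0,E]}(x)\1_{[E,E_0]}(y)$ agree or differ, and to estimate each contribution using three tools: the elementary bound $\int_0^{L^a}\dt\,t\,\e^{-t(y-x)}\le(y-x)^{-2}$ for $y>x$; the pointwise inequality $\gamma^{(2)}\le\gamma_1\gamma_2$ from Lemma~\ref{lemma:lpdiagonal}; and the Lebesgue point hypothesis, which yields $\int_0^r\gamma_i(E\pm s)\,\d s = r\gamma_i(E)(1+\oh(1))$ as $r\downto 0$ for $i=1,2$.

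Setting $D_L(x,y) := \chi_L^-(x)\chi_L^+(y) - \1_{[-E_0,E]}(x)\1_{[E,E_0]}(y)$ and tracking where the smooth cut-offs deviate from the sharp ones, the support of $D_L$ splits into three types of rectangles: the \emph{critical} rectangle $R_0 := [E-L^{-b},E]\times[E,E+L^{-b}]$; the two \emph{asymmetric} rectangles $R_1^+ := [E-L^{-b},E]\times[E+L^{-b},E_0]$ and its partner $R_1^-$ obtained by swapping the roles of $x$ and $y$; and the \emph{far} rectangles involving $x\in[-E_0-1,-E_0]$ or $y\in[E_0,E_0+1]$. On the far rectangles, $y-x\ge 1$ bounds the $t$-integral by $(y-x)^{-2}\le 1$, and since $\mu_\ac$ is finite on bounded sets, these contribute $\Oh(1)$. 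On $R_0$, I would use $|D_L|\le 1$ and $\e^{-t(y-x)}\le 1$; the $t$-integral then gives $L^{2a}/2$, while the product estimate $\int_{R_0}\gamma^{(2)}\le\int_{E-L^{-b}}^E\gamma_1\cdot\int_E^{E+L^{-b}}\gamma_2 = \Oh(L^{-2b})$ yields a contribution of order $L^{2(a-b)}$, which is $\oh(1)$ since $a<b$.

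The main obstacle lies in the asymmetric rectangles, where neither the exponential nor $\chi_L^\pm$ is small. Treating $R_1^+$ (the partner $R_1^-$ being symmetric), the bound $|D_L|\le 1$ together with $\int_0^{L^a}\dt\,t\,\e^{-t(y-x)}\le(y-x)^{-2}$ and $\gamma^{(2)}\le\gamma_1\gamma_2$ reduces the contribution to
\[
	\int_0^{L^{-b}}\gamma_1(E-u)\,\d u\int_{L^{-b}}^{E_0-E}\frac{\gamma_2(E+v)}{(u+v)^2}\,\d v \le \int_0^{L^{-b}}\gamma_1(E-u)\,\d u\int_{L^{-b}}^{E_0-E}\frac{\gamma_2(E+v)}{v^2}\,\d v,
\]
after the substitution $u = E-x$, $v = y-E$ and the elementary estimate $(u+v)^2\ge v^2$. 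The first factor is $\Oh(L^{-b})$ by the Lebesgue point of $\gamma_1$. For the second, setting $G(v) := \int_0^v\gamma_2(E+s)\,\d s$, the Lebesgue point hypothesis gives $G(v) = \gamma_2(E)v + \oh(v)$ as $v\downto 0$; integration by parts then yields $\int_{L^{-b}}^{E_0-E}\gamma_2(E+v)/v^2\,\d v = \Oh(L^b)$. The product is therefore $\Oh(1)$, and summing all pieces completes the proof.
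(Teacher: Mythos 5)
Your proof is correct but takes a genuinely different route from the paper. The paper first rewrites the integrand as $\chi_L^-(x)\bigl[\chi_L^+(y)-\1_{[E,E_0]}(y)\bigr] + \1_{[E,E_0]}(y)\bigl[\chi_L^-(x)-\1_{[-E_0,E]}(x)\bigr]$, handles the far rectangles exactly as you do, and for the near-Fermi pieces keeps the $t$-integral truncated at $L^a$ while using only the \emph{first-order} bound $t\e^{-t\zeta}\le\zeta^{-1}$. The resulting spatial integrand then carries $(y-x)^{-1}$, which is locally integrable across the diagonal, so no further case split is needed; a separate auxiliary lemma (proved by a double integration by parts under the Lebesgue-point hypothesis) gives $\Oh(L^{-b}\ln L)$ for the double integral over the transition strip, hence $\Oh(L^{a-b}\ln L)=\oh(1)$ after multiplying by the $L^a$ from the $t$-integral. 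You instead use the sharper quadratic decay $\int_0^{L^a}\dt\,t\,\e^{-t(y-x)}\le(y-x)^{-2}$ on the off-diagonal rectangles $R_1^\pm$, which is not integrable across the diagonal and therefore forces you to peel off the small square $R_0=[E-L^{-b},E]\times[E,E+L^{-b}]$ and estimate it crudely by $\int_0^{L^a}\dt\,t\cdot\mu_\ac(R_0)=\Oh(L^{2(a-b)})=\oh(1)$. Both routes rest on the same two pillars, namely $\gamma^{(2)}\le\gamma_1\gamma_2$ and the Lebesgue-point hypothesis, and both use $a<b$ inherited from Lemma~\ref{Lemma:Error}; the trade-off is that you exchange a finer geometric decomposition for avoiding the paper's auxiliary lemma, and your integration-by-parts estimate $\int_{L^{-b}}^{E_0-E}\gamma_2(E+v)v^{-2}\,\d v=\Oh(L^b)$ plays precisely the role that auxiliary lemma plays there. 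Your estimate on $R_1^\pm$ lands at $\Oh(1)$ rather than the paper's $\oh(1)$ on the analogous pieces, but $\Oh(1)$ is all the lemma asserts.
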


\noindent
The proof of this lemma is deferred to Sect.~\ref{sec:proofError2}.

In the last lemma, we show how the diagonal of the $\mu_\ac$-density
arises in the large-$t$ limit.

\begin{lemma} \label{lemma:delta_convergence}
  For Lebesgue-a.e.\ $E\in [-E_0, E_0]$ we have
  \begin{align} \label{eq:delta_convergence}
    \lim_{t\to\infty}
    \int_{\R^2} \!\!\d\mu_\ac(x,y)\, t^2 e^{-t(y-x)}1_{[-E_0,E]}(x)1_{ [E,E_0]}(y)
    = \gamma(E).
  \end{align}
\end{lemma}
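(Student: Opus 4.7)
My plan is to realise the integral on the left-hand side of \eqref{eq:delta_convergence} as a trace of a product of two trace-class operators, each of which converges in trace norm to one of the factors in the definition $\gamma(E)=\tr(P_E\varPi_E)$ from Definition~\ref{def:central}, and then pass to the limit by a standard trace-norm inequality. The starting point is the factorisation
\[
\mu_\ac(B\times B') = \tr\bigl(\sqrt V\1_B(H_\ac)\sqrt V\cdot\sqrt V\1_{B'}(H'_\ac)\sqrt V\bigr)
\]
together with the splitting $t^2\e^{-t(y-x)}=[t\e^{t(x-E)}]\cdot[t\e^{-t(y-E)}]$. This rewrites the integral as $\tr(A_tB_t)$ with
\[
A_t:=\sqrt V\,F_t(H_\ac)\sqrt V,\qquad B_t:=\sqrt V\,G_t(H'_\ac)\sqrt V,
\]
where $F_t(\lambda):=t\e^{t(\lambda-E)}\1_{[-E_0,E]}(\lambda)$ and $G_t(\lambda):=t\e^{-t(\lambda-E)}\1_{[E,E_0]}(\lambda)$.

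The core step is to show $A_t\to P_E$ and $B_t\to\varPi_E$ in trace norm as $t\to\infty$, for Lebesgue-a.e.\ $E\in[-E_0,E_0]$. For $A_t$ I would exploit the layer-cake identity
\[
t\e^{t(\lambda-E)}\1_{\lambda\le E}=\int_0^\infty t^2\e^{-tu}\1_{[E-u,E]}(\lambda)\,\d u;
\]
inserting this into $F_t$, checking that the truncation of the $\lambda$-range at $-E_0$ costs only an error of trace norm $\Oh(t\e^{-t(E+E_0)})$, and substituting $s=tu$ yields
\[
A_t=\int_0^{t(E+E_0)}s\e^{-s}\,Y_t(s)\,\d s+\oh(1),\qquad Y_t(s):=\tfrac{t}{s}\sqrt V\1_{[E-s/t,E]}(H_\ac)\sqrt V,
\]
understood as a Bochner integral in the trace-class ideal. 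A Lebesgue-differentiation argument for the positive trace-class-valued set function $B\mapsto\sqrt V\1_B(H_\ac)\sqrt V$, a minor extension of Proposition~\ref{prop:Birman} from centred to left-sided shrinking intervals, then gives $Y_t(s)\to P_E$ in trace norm for every fixed $s>0$ and a.e.\ $E$. The Hardy-Littlewood maximal estimate applied to $\gamma_1\in L^1_\loc(\R)$, which is the scalar Lebesgue density of $\mu_\ac^1$ by Remark~\ref{1D-density}, supplies the uniform bound $\|Y_t(s)\|_1=\tfrac{t}{s}\mu_\ac^1([E-s/t,E])\le M\gamma_1(E)<\infty$ for a.e.\ $E$. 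Trace-norm dominated convergence with dominant $M\gamma_1(E)\,s\e^{-s}\in L^1(\R_{\ge 0})$ then delivers $A_t\to P_E\int_0^\infty s\e^{-s}\,\d s=P_E$. The same argument with $H_\ac$ replaced by $H'_\ac$ gives $B_t\to\varPi_E$ in trace norm, on a common full-measure set of Fermi energies.

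The final step is routine: write $\tr(A_tB_t)-\tr(P_E\varPi_E)=\tr[(A_t-P_E)\varPi_E]+\tr[A_t(B_t-\varPi_E)]$, and bound each summand via $|\tr(XY)|\le\|X\|_1\|Y\|_\infty$. The first vanishes since $\|A_t-P_E\|_1\to0$ and $\|\varPi_E\|_\infty\le\|\varPi_E\|_1<\infty$; for the second, the same dominant gives the uniform bound $\|A_t\|_1\le M\gamma_1(E)$, while $\|B_t-\varPi_E\|_\infty\le\|B_t-\varPi_E\|_1\to0$. Since $\gamma(E)=\tr(P_E\varPi_E)$ by \eqref{gamma-def}, the lemma follows. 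The main obstacle is the heart of the middle step: one needs a vector-valued Lebesgue-differentiation statement for the trace-class-valued measure $B\mapsto\sqrt V\1_B(H_\ac)\sqrt V$ with one-sided shrinking intervals, slightly beyond Proposition~\ref{prop:Birman} as stated, together with the uniform trace-norm dominant required to interchange the $t\to\infty$ limit with the outer Bochner integral in $s$. Everything else is functional-analytic bookkeeping and an intersection of null sets.
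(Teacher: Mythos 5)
Your starting point is the same as the paper's: after identifying the integral with $\tr(A_tB_t)$, where $A_t=\sqrt V\,t\e^{t(H_\ac-E)}\1_{[-E_0,E]}(H_\ac)\sqrt V$ and $B_t=\sqrt V\,t\e^{-t(H'_\ac-E)}\1_{[E,E_0]}(H'_\ac)\sqrt V$, both you and the paper reduce the lemma to showing $A_t\to P_E$ and $B_t\to\varPi_E$ in trace norm, and then close with the same insertion-of-$P_E$ telescoping and $\abs{\tr(XY)}\le\norm{X}_1\norm{Y}$. The difference lies entirely in how the trace-norm convergence is established. The paper proves it by combining two scalar facts -- convergence of the trace norms $\tr B_t=\bigparens{(\gamma_2\1_{[-E_0,E_0]})*\varrho_t}(E)\to\gamma_2(E)=\tr\varPi_E$ at Lebesgue points of $\gamma_2$, and weak convergence on a countable dense set -- and then invokes Gr\"umm's theorem \cite[Addendum~H]{MR2154153}, namely that weak convergence together with convergence of trace norms forces trace-norm convergence. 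This reduces the whole problem to scalar Lebesgue-differentiation. Your route instead writes $A_t$ as a Bochner integral $\int_0^\infty s\e^{-s}Y_t(s)\,\d s$ of one-sided averages $Y_t(s)=\tfrac ts\sqrt V\1_{[E-s/t,E]}(H_\ac)\sqrt V$, bounds $\norm{Y_t(s)}_1$ by a maximal function of $\gamma_1$, and concludes by trace-norm dominated convergence.

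That plan is sound, but the ingredient you flag as the remaining obstacle -- $Y_t(s)\to P_E$ in trace norm for one-sided shrinking intervals -- is not the \emph{minor} extension of Proposition~\ref{prop:Birman} you suggest. Positivity of the operator-valued set function plus convergence of the centred difference quotients does not by itself force convergence of the one-sided ones. If $Q_\varepsilon$ and $R_\varepsilon$ denote the left- and right-sided quotients at $E$, the centred statement only gives $Q_\varepsilon+R_\varepsilon\to 2P_E$, and even knowing $\tr Q_\varepsilon\to\tr P_E$ one cannot conclude $Q_\varepsilon\to P_E$: take $P_E=\one$ on $\C^2$, $D=\mathrm{diag}(1/2,-1/2)$, $Q_\varepsilon=P_E+D\ge0$, $R_\varepsilon=P_E-D\ge0$; all the constraints hold while $\norm{Q_\varepsilon-P_E}_1=1$. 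So closing the gap genuinely requires either a stronger form of the Birman--\`Entina lemma (trace-norm differentiability of $E'\mapsto\sqrt V\1_{]-\infty,E']}(H_\ac)\sqrt V$, which the original reference may provide but the paper does not state) or an appeal to the Radon--Nikod\'ym property of the trace class. The paper's choice of route is precisely what lets it avoid any operator-valued differentiation beyond the stated Proposition~\ref{prop:Birman}.
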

\begin{proof}
  To shorten formulas, we suppress the subscript $\ac$ and write
  $H^{(\prime)} = H^{(\prime)}_\ac$ in this proof.
  Recalling the definition of $\mu_\ac$ and the identity $\tr(P_E \varPi_E) =
  \gamma(E)$, which is valid for Lebesgue-a.e.\ $E\in\R$, we have to show that
  \begin{equation}
    \Bigabs{\tr\Bigbraces{\sqrt{V} t e^{t(H-E)}\1_{[-E_0,E]}(H)
                  V t e^{-t(H'-E)}\1_{[E,E_0]}(H') \sqrt{V}} -
    \tr(P_E\varPi_E)}
  \end{equation}
  vanishes as $t\to\infty$.
  We bound this expression from above by
 \begin{align} \label{eq:delta_convergence_inserted}
    &\Bigl\lvert\tr\Bigbraces{
      \Bigparens{\sqrt{V} t e^{t(H-E)}\1_{[-E_0,E]}(H)\sqrt{V} - P_E}
      \sqrt{V} t e^{-t(H'-E)}\1_{[E,E_0]}(H')\sqrt{V}}\nonumber \\
    &\quad\qquad\qquad+ \tr\Bigbraces{P_E \Big(\sqrt{V}te^{-t(H'-E)}\1_{[E,E_0]}(H')\sqrt{V}
      - \varPi_E \Big)} \Bigr\rvert\nonumber
    \\
    &\le \Bigparens{\sup_{t\ge 0}
      \bignorm{\sqrt{V} t e^{-t(H'-E)}\1_{[E,E_0]}(H')\sqrt{V}}}\nonumber \\
    &\qquad\qquad\qquad\qquad\qquad\qquad
    \times\tr\bigabs{\sqrt{V} t e^{t(H-E)}\1_{[-E_0,E]}(H)\sqrt{V} - P_E}\nonumber
    \\&\quad\qquad\qquad + \norm{P_E} \tr\bigabs{\sqrt{V}te^{-t(H'-E)}\1_{[E,E_0]}(H')\sqrt{V} - \varPi_E}.
  \end{align}
  First, we claim that
  \begin{equation}
  	\label{Pi-conv}
   \lim_{t\to\infty} \sqrt{V}te^{-t(H'-E)}\1_{[E,E_0]}(H')\sqrt{V} = \varPi_E
  \end{equation}
  in trace class, for Lebesgue-a.e.\ $E\in[-E_0,E_0]$. To
  see this, we show convergence of the trace norms and weak
  convergence, which implies convergence in trace class by
  \cite[Addendum H]{MR2154153}. For the trace norms, we compute
  \begin{align} \label{eq:delta_convolution}
    \tr\bigabs{\sqrt{V}te^{-t(H'-E)}\1_{[E,E_0]}(H')\sqrt{V}}
     &= \tr\bigparens{\sqrt{V}te^{-t(H'-E)}\1_{[E,E_0]}(H')\sqrt{V}}
     \nonumber\\
     &= \int_E^{E_0}\!\!\dy\, \gamma_2(y) t e^{-t(y-E)} \nonumber\\
     &= \bigparens{(\gamma_2\1_{[-E_0,E_0]}) * \varrho_t}(E),
  \end{align}
  with $x\mapsto\varrho_t(x) := t e^{tx} \1_{]-\infty,0[}(x)$ being an
  approximation of the Dirac delta distribution. As $t\to\infty$, the convolution in
  \eqref{eq:delta_convolution} converges for Lebesgue-a.e.\ $E\in[-E_0,E_0]$
  to $\gamma_2(E) = \tr \varPi_E = \tr\abs{\varPi_E}$,
    see e.g.\ \cite[Chap.~1]{MR1232192}. Thus, the trace norm of
  $\sqrt{V}te^{-t(H'-E)}\1_{[E,E_0]}(H')\sqrt{V}$ converges to that of
  $\varPi_E$. In particular,
  \begin{equation} \label{eq:sup_norm_texp}
    \sup_{t\ge 0} \bignorm{\sqrt{V} t
    e^{-t(H'-E)}\1_{[E,E_0]}(H')\sqrt{V}}
    < \infty.
  \end{equation}

  It remains to show weak convergence. To
  this end, take some dense countable set $\mathcal{D}\subseteq
  L^2(\R^d)$.
  Then by a similar delta-argument as above
  \begin{equation}
    \lim_{t\to\infty} \<\varphi, \sqrt{V}te^{-t(H'-E)}\1_{[E,E_0]}(H')\sqrt{V}\psi\>
    = \<\varphi, \varPi_E\psi\>
  \end{equation}
  for all $\varphi, \psi\in\mathcal{D}$ and all $E\in
  [-E_0,E_0]$ outside a null set depending on $\mathcal{D}$.
  Together with \eqref{eq:sup_norm_texp}, this proves weak
  convergence to $\varPi_E$ for Lebesgue-a.e.\ $E\in [-E_0,E_0]$, see
  \cite[Thm.~4.26]{MR566954}.

  The same argument proves $\lim_{t\to\infty} \sqrt{V}te^{t(H-E)}\1_{[-E_0,E]}(H)\sqrt{V} = P_{E}$
  in trace class. Using this, \eqref{eq:sup_norm_texp}, the
  boundedness of $P_E$ and \eqref{Pi-conv}, the right-hand side of
  \eqref{eq:delta_convergence_inserted} is seen to vanish as $t\to\infty$.
\end{proof}

We are now ready for the proof of Theorem~\ref{thm:modAI}, which also
completes the proof of Theorem~\ref{thm:AI}
and, thus, of Theorems~\ref{thm:main} and \ref{thm:main}'.

\begin{proof}[Proof of Theorem~\ref{thm:modAI}]
Let $0<a<b<1$. Lemmas~\ref{lemma:AIasIntegral}, \ref{Lemma:Error} and \ref{Lemma:Error2}
imply that
\begin{equation}
	\label{start-ineq}
	 \F_{L}(E)  \ge
	 \int_0^{L^a} \!\dt\, t  \int_{[-E_0,E]\times [E,E_0]} \!\d\mu_{\ac}(x,y) \,\e^{-t(y-x)} \,
	 \;+\; \Oh(1)
\end{equation}
as $L\to\infty$ for Lebesgue-a.e.\ $E\in [-E_{0}+1,E_{0} -1]$ (more precisely
those $E$ which are Lebesgue points of $\gamma_{1}$ and $\gamma_{2}$). Here, we have also used $\mu(C) \ge \mu_{\ac}(C)$ for every $C\in \Borel(\R^{2})$ and
the non-negativity of the integrand.

\noindent Furthermore, using Lemma~\ref{lemma:delta_convergence},
\begin{equation}
  \lim_{L\to\infty} \frac{1}{\ln L} \int_1^{L^a} \frac{\dt}{t} \;
  \biggabs{
    \int_{[-E_0,E]\times [E,E_0]} \!\!\d\mu_{\ac}(x,y) \,\ t^2 e^{-t(y-x)} \,
    - \gamma(E)} = 0
\end{equation}
for Lebesgue-a.e.\ $E\in [-E_0+1, E_0 - 1]$ (the exceptional set being
independent of $a$).
Thus, we can replace the integrand in \eqref{start-ineq} by
$\gamma(E)$ at the expense of a sublogarithmic error and arrive at
\begin{equation}
 	\F_{L}(E) \ge  \int_1^{L^a} \frac{\dt}{t} \; \gamma(E) + \oh(\ln L)
\end{equation}
as $L\to\infty$ for Lebesgue-a.e.\ $E\in\R$, which proves the theorem.
\end{proof}

%
\section{Proof of Lemma \ref{Lemma:Error}} \label{Section:Error:Proof}

To shorten the formulas, we assume w.l.o.g.\ that $E=0$. This can always be achieved by an energy
shift of the Hamiltonians. We define the abbreviations
\begin{equation}
  g_L^t(x)  := \chi_L^-(x)\,\e^{tx}
  \qquad \text{and}\qquad
  f_L^t(x)  := \chi_L^+(x)\,\e^{-tx}
\end{equation}
for every $x\in\R$ and $t \ge 0$ so that Lemma~\ref{Lemma:Error} can be reformulated as
\begin{equation}
	\label{kah}
  \int_0^{L^a} \!\dt\, t \,K_{L}(t)  = \oh(1)
\end{equation}
as $L\to\infty$ with
\begin{equation}
  K_{L}(t) := \tr\Bigparens{\sqrt V g_L^t(H_L)  V f_L^t(H_L')\sqrt V}
  	- \tr\Bigparens{\sqrt V g_L^t(H) V f_L^t(H')\sqrt V} .
  \label{errorbound2}
\end{equation}
We estimate this function according to $	|K_{L}(t)| \le K_{L}^{(1)}(t) + K_{L}^{(2)}(t)$, where
\begin{align}
 	 K_{L}^{(1)}(t) &:= \tr\left(\sqrt V f_L^t(H')\sqrt V\right)
	 	\Bignorm{\sqrt V \Big(g_L^t(H_L)-g_L^t(H)\Big)\sqrt V}, \\
  K_{L}^{(2)}(t) &:= \tr\left(\sqrt V g_L^t(H_{L})\sqrt V\right)
    \Bignorm{\sqrt V \Big(f_L^t(H'_L)-f_L^t(H')\Big)\sqrt V}.
	\label{errorbound}
\end{align}
Since both $K_{L}^{(1)}$ and $K_{L}^{(2)}$ can be estimated in the very same way,
we will demonstrate the argument for $K_{L}^{(2)}$ only.
Our main technical tool is the Helffer-Sj\"ostrand formula, see e.g.\ \cite[Section IX]{MR1768629}, according to which
\begin{equation}\label{Helffer-Sj}
  f_L^t(H_L')-f_L^t(H')
  = \frac 1 {2\pi}
    \int_\C \dz\;
    \big(\partial_{\bar z} \wtilde f_{L}^{t}(z)\big)\left(\frac 1 {H_L'-z}-\frac 1 {H'-z}\right).
\end{equation}
Here, $z:=x+iy$, $\partial_{\bar z}:=\partial_x+i\partial_y$, $\d z := \d x\d y$
and $\wtilde f_{L}^{t}\in C_c^2(\C)$ is an almost analytic extension of $f_L^t$ to the complex plane.
The latter can be chosen as
\begin{equation}
  \wtilde f_{L}^{t}(z):= \xi(z)
  \sum_{k=0}^{n} \frac {(iy)^k}{k!} \, \frac{\d^kf_L^t}{\d x^k}(x)
\end{equation}
for some $n\in\N$ and some $\xi\in C_c^{\infty}(\C)$ with $\xi(z) = 1$ for all $z\in
\supp f_L^t \times [-1,1]$, $ \xi(z) = 0$ for all $z$ such that
$\dist_{\C}(z,\supp f_L^t)\ge 3$ and $\xi(z) \in [0,1]$ otherwise.
We will assume $n \ge 2$ below.
Since $\supp  f_L^t = [0, E_{0}+1]$, the function $\xi$ can be chosen independently of $L$ and~$t$,
and such that ${\norm{\xi}}_{\infty}=1$ and ${\norm{ \xi'}}_{\infty}<1$.

For later purpose we introduce the function $h:=\sum_{k=0}^{n+1}\bigabs{\frac{\d^kf_L^t} {\d x^k}}
\in C_{c}(\R)$ and
infer the existence of a constant $C\in {]0,\infty[}$, which is independent of $L$ and $t$, such that
 \begin{equation}\label{def:chi_L5}
  \abs{\partial_{\bar z} \wtilde f_{L}^{t}(z)}\le C\abs{y}^n h(x)
 \end{equation}
for all $z\in\C$.
Furthermore, the bound \eqref{def:chi_L3} implies the estimate
\begin{equation}
 	\biggabs{\frac{\d^kf_L^t} {\d x^k}(x)} \le L^{bk} \sum_{\kappa=0}^{k} \bigg(
\begin{array}{@{}c@{}} k\\ \kappa\end{array} \bigg)
	\bigg(\frac{t}{L^{b}} \bigg)^{\kappa} c_{k-\kappa} \,\1_{[0, E_{0}+1]}(x)
\end{equation}
for every $t\ge0$, $L \ge 1$ and $x\in\R$. From this we conclude the
existence of a polynomial $Q_{n}$ over $\R$
of degree $n+1$ with non-negative coefficients such that
\begin{equation}
	0 \le h(x) \le Q_{n}(t/L^{b})\, L^{b(n+1)} \,\1_{[0, E_{0}+1]}(x).
  \label{def:chi_L7}
\end{equation}
 We will split the contribution of \eqref{Helffer-Sj} in \eqref{errorbound} into two parts.
 Accordingly, we define for $\varepsilon \in {]0,1-b[}$
\begin{equation}
	\label{D<}
  D_{L}^{<}(t) := \frac 1 {2\pi} \int_{\abs{y} \le L^{-1+\varepsilon}} \d z\,
    \big(\partial_{\bar z} \wtilde f_{L}^{t}(z) \big)\, \sqrt V\left[ \frac 1 {H'_L-z} - \frac 1 {H'-z}\right] \sqrt V
\end{equation}
and
\begin{equation}
	\label{D>}
  D_{L}^{>}(t):= \frac 1 {2\pi} \int_{\abs{y} > L^{-1+\varepsilon}} \d z\,
  	\big(\partial_{\bar z} \wtilde f_{L}^{t}(z)\big) \, \sqrt V\left[\frac 1 {H'_L-z}-\frac 1{H'-z}\right] \sqrt V .
\end{equation}
Then, using the boundedness of $V$ and the estimates \eqref{def:chi_L5} and \eqref{def:chi_L7},
we obtain
\begin{align}
	\label{Helf-Sjostrand2}
  \norm{D_{L}^{<}(t)}
  & \le
  \frac 1 {2\pi} \int_{\abs{y}\le L^{-1+\varepsilon}} \!\dz\;
  \abs{\partial_{\bar z} \tilde f(z)}\,
  \frac 2 {\abs{y}} \,\norm{\sqrt V}^2 \nonumber \\
  & \le \frac{C}{\pi} \, \|V\|_{\infty}
  	\int_{\abs{y}\le L^{-1+\varepsilon}} \!\d z \,\abs{y}^{n-1} {h(x)} \nonumber \\
  & = \frac{2C}{\pi n} \, \|V\|_{\infty}  L^{n(-1+\varepsilon)} \int_{\R} \!\d x\, h(x) \nonumber\\
  &	\le C_{<} \, Q_{n}(t/L^{b}) \, L^{b+n(-1+\varepsilon +b)} ,
\end{align}
where $C_{<}:= (2C/\pi n) \, (E_{0}+1)\|V\|_{\infty} < \infty$.

We estimate the norm of \eqref{D>} with the help of the geometric resolvent inequality --
see e.g.\ \cite[Lemma 2.5.2]{MR1935594}, whose proof extends to Kato decomposable potentials --, the bound \eqref{def:chi_L5} and the fact that $\xi(z) =0$ if
$\dist(z,\R) \ge 3$. This gives for $L > 3$
\begin{align}
	\label{De-ge}
  \norm{D_{L}^{>}(t)}
 & \le \frac{C_{\textsc{gre}}}{2\pi}  \, \|V\|_{\infty} \int_{\abs{y}>L^{-1+\varepsilon}} \dz\;
  \abs{\partial_{\bar z} \tilde f(z)}\Bignorm{\1_{\supp V} \,
  \frac 1{H'_L-z}\1_{\delta\Lambda_{L}}}  \notag \\
  & \hspace*{5cm} \times\Bignorm{\1_{\delta\Lambda_{L}}\frac 1 {H'-z} \1_{\supp V}}
  	\notag \\
  &
  \le    \frac{CC_{\textsc{gre}}}{2\pi}  \, \|V\|_{\infty}
  \int_{\abs{y} \in {]L^{-1+\varepsilon}, 3]}} \!\dz\, h(x) \,\abs{y}^{n-1}
  \Bignorm{\1_{\delta\Lambda_{L}} \frac 1 {H'-z}\1_{\supp V}},
\end{align}
where $\delta\Lambda_{L} := \Lambda_L\setminus \Lambda_{L-1}$ and the constant
$C_{\textsc{gre}} < \infty$ depends only on $E_{0}$, the space dimension and the
potentials $V_{0}$ and $V$.
The operator norm in the last line of \eqref{De-ge} is bounded by a
Combes-Thomas estimate for operator kernels of
resolvents, see e.g.\ \cite[Thm.~1]{MR1937430},
\begin{equation}
	\label{CoTho}
  \Bignorm{\1_{\varGamma} \frac 1 {H'-z} \1_{\varGamma'}}
  \le \frac{C_{\textsc{ct}}}{|y|} \, \e^{- c_{\textsc{ct}}\dist(\varGamma,\varGamma') \abs{y}}.
\end{equation}
It holds for all cubes $\varGamma, \varGamma' \subset\R^{d}$
of side length $1$ and all $z$ in some bounded subset of $\C$, which we choose as
$\supp(h) \times [-3,3]$. The constants $C_{\textsc{ct}}, c_{\textsc{ct}} \in {]0,\infty[}$ in \eqref{CoTho}
can be chosen to depend only on $E_{0}$, the space dimension and the
potentials $V_{0}$ and $V$. Now, we assume $n \ge 2$, cover $\supp(V)$ and $\delta\Lambda_{L}$ by unit cubes and apply the bounds \eqref{CoTho} and \eqref{def:chi_L7} to \eqref{De-ge}. In this way we infer the
existence of a constant $\wtilde{C}_{>} \in {]0,\infty[}$, which is independent of $L$ and $t$, such that
\begin{align}
	\norm{D_{L}^{>}(t)} &
  \le \wtilde{C}_{>} (E_{0}+1) Q_{n}(t/L^{b}) L^{d + b(n+1)}
  	\int^{3}_{L^{-1+\varepsilon}} \!\dy\; {y}^{n-2}
		\e^{-c_{\textsc{ct}} L y/2}  \nonumber \\
  & \le C_>   Q_{n}(t/L^{b}) \, L^{d + b(n+1)} \,
  \e^{-c_{\textsc{ct}} L^{\varepsilon}/2} \label{Helf-Sjostrand3}
\end{align}
for all $t \ge 0$ and all $L >L_{n}$, where $L_{n}$ depends only on $\supp(V)$ and
$C_{>} := 3^{n-1} (n-1)^{-1} \wtilde{C}_{>} (E_{0}+1) $.

Combining \eqref{errorbound}, \eqref{Helffer-Sj}, \eqref{D<} -- \eqref{Helf-Sjostrand2} and
\eqref{Helf-Sjostrand3}, we obtain the estimate
\begin{equation}
	\label{Helf-Sjo3b}
  \int_0^{L^a}\dt\, t  \, K_{L}^{(2)}(t)
  \le \Big( C_{<} L^{b+n(-1+\varepsilon+b)} + C_{>} L^{d+b(n+1)}
    	 \e^{-c_{\textsc{ct}} L^{\varepsilon}/2}\Big) \Phi_{n}(L) ,
\end{equation}
for all $L > L_{n}$, where
\begin{align}
	\label{Helf-Sjo4}
 	\Phi_{n}(L) &:= \int_0^{L^a}\!\dt\, t Q_{n}(t/L^{b}) \tr\left(\sqrt V g_L^t(H_L)\sqrt V\right)
	  \nonumber\\
	& \phantom{:}\le L^{a} Q_{n}(L^{a-b})  \int_0^{\infty}\dt\,
	\tr\left(\sqrt V g_L^t(H_L)\sqrt V\right)
\end{align}
The map $\nu_L: B\mapsto \tr\big(\sqrt V \1_B(H_L)\sqrt V\big)$, where $B\in\Borel(\R)$,
defines a Borel measure on $\R$. In this context we note that
\begin{equation}
	\label{fin-int}
 	\what{\nu}(B) := \sup_{L>1}\nu_{L}(B) < \infty \qquad \text{for~~} B\in\Borel(\R) \text{~with~}
	\sup B < \infty,
\end{equation}
as can be seen from bounding the spectral projection in terms of the semigroup,
$\1_{B}(x) \le \exp\{- (x- \sup B)\}$, $x\in\R$,
 and an explicit estimate using the Feynman-Kac representation, see e.g.\ \cite[Thm.~6.1]{MR1756112}.
From \eqref{fin-int} and \eqref{def:chi_L2} we get the following
estimate for the integral in the second line of \eqref{Helf-Sjo4}
\begin{align}
	\label{Helf-Sjo5}
  \int_0^{\infty}\dt & \tr\left(\sqrt V g_L^t(H_L)\sqrt V\right) \nonumber\\
  & = \int_{[-E_{0}-1,-L^{-b}]} \d \nu_L(x) \; \frac{\chi^-_L(-x)}{(-x)}
  		+ \int_{]-L^{-b},0]}  \d\nu_L(x)\; \frac{\chi^-_L(-x)}{(-x)} \nonumber\\
  & \le  L^{b} \nu_L\big([-E_{0}-1,-L^{-b}]\big) + c_0 L^b  \nu_L\big(]-L^{-b},0]\big) \nonumber \\
  & \le (1+c_{0}) \, \what{\nu}\big([-E_{0}-1,0]\big) \, L^{b} =: \Xi L^{b}.
\end{align}
Taken together, \eqref{Helf-Sjo3b}, \eqref{Helf-Sjo4} and \eqref{Helf-Sjo5} imply
\begin{multline}
	\label{Helf-Sjo6}
  \int_0^{L^a}\!\dt\, t  \, K_{L}^{(2)}(t)
   \le \Xi Q_{n}(L^{a-b})  L^{a+b}   \Big( C_{<} L^{b+n(-1+\varepsilon+b)} \\
   + C_{>} L^{d+b(n+1)} \,
    	 \e^{-c_{\textsc{ct}} L^{\varepsilon}/2}\Big)
\end{multline}
for every $L> L_{n}$. We recall that $0 <\varepsilon < 1-b$. Therefore we can choose $n$ large enough as to ensure
\begin{equation}
a+2b +n(-1+b+\varepsilon) < 0.
\end{equation}
Since we assumed $a<b$, we conclude that
\begin{equation}
	  \int_0^{L^a}\!\dt\, t  \, K_{L}^{(2)}(t) = \oh(1)
\end{equation}
as $L\to\infty$. The same holds true for $K_{L}^{(1)}$ by an analogous argument.
Thus, we have shown \eqref{kah}.
 \qed

%
\section{Proof of Lemma \ref{Lemma:Error2}}
\label{sec:proofError2}

We rewrite the difference in the integrand in \eqref{lemma2error} as
$\chi_{L}^{-}(x) [\chi_{L}^{+}(y) - 1_{[E,E_{0}]}(y) ] + 1_{[E,E_{0}]}(y) [ \chi_{L}^{-}(x) -
1_{[-E_{0},E]}(x)]$. Extending the $t$-integral in \eqref{lemma2error} up to $+\infty$,
the two error terms from removing the smoothing where $x\in [-E_{0}-1, -E_{0}]$ or $y\in[E_{0},E_{0}+1]$ are seen to be bounded from above by
\begin{equation}
 	\int_{\R^2} \frac{\d\mu_{\ac}(x,y)}{(y-x)^{2}} \;
		\Big[\1_{[-E_{0}-1,E]}(x)\1_{[E_{0},E_{0}+1]}(y)
			+ \1_{[-E_{0}-1,-E_{0}]}(x) \1_{[E,E_{0}+1]}(y) \Big].
\end{equation}
This $L$-independent expression is finite, because $y-x \ge 1$ in the compact support of the integrand
thanks to $E\in [-E_{0}+1,E_{0} -1]$. Thus, this error is of order $\Oh(1)$ as $L\to\infty$.

In order to estimate the two error terms in \eqref{lemma2error}
from removing the smoothing where $x\in [E- L^{-b},E]$ or $y\in[E,E+L^{-b}]$,
we use the inequality \eqref{eq:gamma} for the Lebesgue density of $\mu_{\ac}$ together with the elementary estimate $t \e^{-t\zeta} \le \zeta^{-1}$ for $t,\zeta > 0$. This gives the upper bound
\begin{multline}
  \label{eq:Error2:eq1}
  L^a 	\int_{\R^2}\!\d x\d y\; \frac{\gamma_{1}(x)\, \gamma_{2}(y)}{y-x} \;
  \Big[\1_{[-E_{0}-1,E]}(x)\1_{[E,E+L^{-b}]}(y)  \\
	+ \1_{[E-L^{-b},E]}(x) \1_{[E,E_{0}]}(y) \Big] .
\end{multline}
But this bound is of order $\Oh(L^{a-b}\ln L)$, as follows from the next lemma because $E$ is a Lebesgue point of both $\gamma_{1}$ and $\gamma_{2}$. Since $b>a$, the proof is complete.
\qed

\begin{lemma}
  Let $A >0$ and $\kappa_{1}, \kappa_{2} \in L^{1}_{\loc}(\R)$.
  Assume that $0$ is a Lebesgue point of both $\kappa_{1}$ and $\kappa_{2}$. Then we have
  \begin{equation}
  	\label{double-int}
    \int_{-A}^0\dx\int_0^{\eta} \!\dy\, \frac{\kappa_1(x)\,\kappa_2(y)}{y-x}
    = \Oh(\eta \ln\eta)
	\end{equation}
	as $\eta\downarrow 0$.
\end{lemma}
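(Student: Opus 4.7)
The plan is to exploit the Lebesgue point hypothesis through the crude consequence that the one-sided averages
\[
G_1(t) := \int_{-t}^0 \abs{\kappa_1(x)}\,\dx
\qquad\text{and}\qquad
G_2(t) := \int_0^t \abs{\kappa_2(y)}\,\dy
\]
satisfy $G_i(t) = \Oh(t)$ as $t\downto 0$, since $G_i(t) = \abs{\kappa_i(0)}\,t + \oh(t)$. The integrand in \eqref{double-int} is dominated in absolute value by $\abs{\kappa_1(x)}\abs{\kappa_2(y)}/(y-x)$, so the task reduces to bounding this positive integral. My plan is to split the domain $(-A,0)\times(0,\eta)$ into a \emph{far} region $(-A,-\eta)\times(0,\eta)$ and a \emph{near} region $(-\eta,0)\times(0,\eta)$ and treat them separately.

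On the far region I would use the crude bound $1/(y-x)\le 1/(-x)$ valid for $y\ge 0$, together with $\int_0^\eta \abs{\kappa_2(y)}\,\dy = G_2(\eta) = \Oh(\eta)$. This reduces the far-region contribution to
\[
\Oh(\eta)\,\int_{-A}^{-\eta}\frac{\abs{\kappa_1(x)}}{-x}\,\dx
= \Oh(\eta)\,\int_{\eta}^{A}\frac{\d G_1(u)}{u}.
\]
An integration by parts, using $G_1(A)<\infty$ and $G_1(u) = \Oh(u)$ near~$0$, yields $\int_\eta^A u^{-1}\,\d G_1(u) = \Oh(1) + \Oh\bigparens{\int_\eta^A u^{-1}\,\d u} = \Oh(\abs{\ln\eta})$, so the far region contributes $\Oh(\eta\abs{\ln\eta})$.

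On the near region I would discard the logarithmic factor and instead separate the variables through the AM-GM estimate $y-x = y+(-x) \ge 2\sqrt{y\,(-x)}$, obtaining
\[
\int_{-\eta}^0\!\!\!\int_0^\eta \frac{\abs{\kappa_1(x)\kappa_2(y)}}{y-x}\,\dy\,\dx
\le \tfrac{1}{2}\biggparens{\int_0^\eta \frac{\abs{\kappa_1(-u)}}{\sqrt{u}}\,\d u}
\biggparens{\int_0^\eta \frac{\abs{\kappa_2(y)}}{\sqrt{y}}\,\dy}.
\]
Each factor is $\Oh(\sqrt{\eta})$ by another integration by parts against $G_i$ using $G_i(t)=\Oh(t)$, since $\int_0^\eta u^{-1/2}\,\d G_i(u) = G_i(\eta)\eta^{-1/2} + \tfrac{1}{2}\int_0^\eta G_i(u)\,u^{-3/2}\,\d u = \Oh(\sqrt\eta)$. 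Hence the near region contributes $\Oh(\eta)$, which is absorbed into $\Oh(\eta\abs{\ln\eta})$, giving the claim.

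The routine parts are the two integrations by parts; the only slight subtlety is verifying that the boundary term at $u=0$ vanishes in each case, which follows from $G_i(u) = \Oh(u)$. I do not expect any genuine obstacle: the entire argument rests on the single fact that $G_1,G_2$ are Lipschitz at the origin in the one-sided sense coming from the Lebesgue point assumption.
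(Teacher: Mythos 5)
Your proof is correct, but it follows a genuinely different route than the paper's. The paper performs two nested integrations by parts: first in $y$ against the primitive $\int_0^y\kappa_2$, which produces an intermediate bound $C_2\ln(1-\eta/x)$, and then in $x$ against $-\int_x^0\kappa_1$, yielding an explicit closed-form estimate $C_1C_2\bigl[A\ln(1+\eta/A)+\eta\ln(1+A/\eta)\bigr]$. You instead split the domain into a \emph{far} rectangle $(-A,-\eta)\times(0,\eta)$, where the crude kernel bound $1/(y-x)\le 1/(-x)$ together with a single integration by parts in the $x$-variable delivers the $\Oh(\eta\abs{\ln\eta})$ term, and a \emph{near} square $(-\eta,0)\times(0,\eta)$, where the AM--GM inequality $y-x\ge 2\sqrt{y(-x)}$ separates the variables and gives a smaller $\Oh(\eta)$ contribution. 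Both approaches ultimately rest on the same one-sided Lipschitz fact $G_i(t)=\Oh(t)$ at the origin (and, implicitly, that this bound holds uniformly up to $t=A$, which follows from $\kappa_i\in L^1_\loc$ just as the paper's finiteness of $C_1,C_2$ does). Your decomposition makes more visible \emph{where} the logarithm originates (from the $x$-variable running over the whole interval $(-A,-\eta)$), whereas the paper's argument is more compact and produces explicit constants; the AM--GM step is a neat way to avoid a second genuine integration by parts in the near region.
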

\begin{proof}
  W.l.o.g.\ we assume $\kappa_{1},\kappa_{2} \ge0$. Let $\eta \in {]0,1[}$.
  We start with an estimate for the $y$-integral for given $x<0$.
  The function $[0,\eta] \ni y \mapsto \int_0^y \!\d\zeta\, \kappa_2(\zeta)$ is absolutely
  continuous by the fundamental theorem of calculus for Lebesgue integrals.
  Therefore we can apply integration by parts to conclude
  \begin{align}
  	\label{y-int}
    \int_0^{\eta} \!\dy\, \frac{\kappa_2(y)}{y-x}
    &= \frac{\int_0^y \!\d\zeta\, \kappa_2(\zeta)}{y-x}
      \bigg|_{y=0}^{y=\eta}
      + \int_0^{\eta} \!\d y\, \frac{\int_0^y \d\zeta\,\kappa_2(\zeta)}{(y-x)^2}
       \nonumber\\
    &\le C_{2} \left( \frac{\eta}{\eta-x} +
      \int_0^{\eta} \!\dy\, \frac{y}{(y-x)^2} \right)
      = C_{2} \ln(1-\eta/x),
  \end{align}
  where $C_{2} := \sup_{y \in {]0,1[}} y^{-1} \int_0^y \d\zeta\,\kappa_2(\zeta)
  \in {]0,\infty[}$ is finite, because
  $0$ is a Lebesgue point of $\kappa_2$.

	The function $[-A,0] \ni x \mapsto -\int_x^0 \!\d\zeta\, \kappa_1(\zeta)$ is absolutely
  continuous so that we can use integration by parts to
  compute the $x$-integral of $\kappa_{1}$ times the bound
  \eqref{y-int}
  \begin{multline}
    \int_{-A}^{-\varepsilon}\!\dx\, \kappa_1(x) \, \ln(1- \eta/x)  \\
    = \bigg(-\int_x^0 \!\d\zeta\, \kappa_1(\zeta) \bigg) \ln(1- \eta/x)
      \bigg|_{x=-A}^{x = -\varepsilon}
    + \int_{-A}^{-\varepsilon} \!\d x\,  \frac{\eta/x^{2}}{1-\eta/x}
    \int_x^0 \d\zeta\,\kappa_1(\zeta),
 	\end{multline}
	where we introduced $\varepsilon \in {]0,A[}$ to exclude the singularity of the logarithm
	(and have dropped the constant $C_{2}$). Since $0$ is a Lebesgue point of $\kappa_1$, we
	infer the existence of 	a finite constant
	$C_{1} := \sup_{x \in {]-A,0[}} (-x)^{-1} \int_x^0 \d\zeta\,\kappa_1(\zeta) \in {]0,\infty[}$
	and can perform the limit $\varepsilon\downarrow 0$ to get
	\begin{align}
    \int_{-A}^0\dx\int_0^{\eta} \!\dy\, \frac{\kappa_1(x)\,\kappa_2(y)}{y-x}
    &\le C_{1}C_{2} \bigg[ A \ln(1+\eta/A) + \int_{-A}^{0}\!\d x\, \frac{\eta}{\eta -x}\bigg]
    \nonumber\\[.5ex]
    & = C_{1}C_{2} \big[ A \ln(1+\eta/A) + \eta \ln(1+ A/\eta) \big].
 	\end{align}
  This implies \eqref{double-int}.
\end{proof}

%
\appendix
\section*{Appendix. Relation to Scattering Theory}
\addtocounter{section}{1}
\setcounter{equation}{0}

In this section we conduct the proofs of Theorem~\ref{prop:scattering} and Corollary~\ref{corollary:anderson}. The goal is to express the decay exponent $\gamma(E)$ of the overlap in
Theorem~\ref{thm:main} in terms of quantities from Scattering Theory.

\begin{proof}[Proof of Theorem~\ref{prop:scattering}]
We recall that $V_{0}=0$ in this theorem. Hence, \eqref{acac} implies that
both $H$ and $H'$ have purely absolutely continuous spectrum with
$\spec(H) = \spec(H') = [0,\infty[$. Moreover, we know from
\eqref{gamma-def} and Proposition~\ref{prop:Birman} that
\begin{equation}\label{gamma-start}
  \gamma(E)
  = \lim_{\varepsilon\to 0}
  \varepsilon^{-2}
  \tr(\sqrt{V} P_E^\varepsilon V\varPi_E^\varepsilon\sqrt{V})
\end{equation}
exists for Lebesgue-a.e.\ $E\in\R$. This implies in particular that
$\gamma(E) = 0$ for $E < 0$.

The integral kernels of the spectral projections $P_E^{\varepsilon}$ and $\varPi_E^{\varepsilon}$
can be represented as
\begin{align}
	\label{kernel-rep}
  P_E^{\varepsilon}(x,y) & = \frac{1}{2(2\pi)^d}\int_0^\infty\!\d\lambda\,
    \lambda^{\frac{d-2}{2}} \1_{I_{E,\varepsilon}}(\lambda)
    \int_{\mathbb{S}^{d-1}}\!\d\varOmega(\omega) \,
    \e^{\i\sqrt{\lambda}y\cdot\omega} \,
    \e^{-\i\sqrt{\lambda}x\cdot\omega},
 	\nonumber\\[.5ex]
  \varPi_E^{\varepsilon}(x,y) & = \frac{1}{2(2\pi)^d}\int_0^\infty\!\d\lambda\,
    \lambda^{\frac{d-2}{2}}   \1_{I_{E,\varepsilon}}(\lambda)
    \int_{\mathbb{S}^{d-1}}\!\d\varOmega(\omega) \,
    \overline{\psi(y;\omega,\lambda)} \,\psi(x;\omega,\lambda)
\end{align}
for all $x,y\in\R^{d}$, where $I_{E,\varepsilon} := {]E -\varepsilon/2, E+\varepsilon/2[}$,
$x\cdot\omega$ stands for the Euclidean scalar product of $x$ with
$\omega$ (viewed as a unit vector in $\R^{d}$) and $\psi : \R^d\times\mathbb{S}^{d-1}\times[0,\infty[\to\C$ are the generalized
eigenfunctions of $H' = -\Delta + V$ due to Ikebe and Povzner, see Sect.~1.3 in \cite{MR1774673} and
references therein.
In particular, $\psi(\dotid;\omega,\lambda)$ is a solution
to the Lippmann-Schwinger equation.
Inserting \eqref{kernel-rep} in \eqref{gamma-start} gives
\begin{align}
	\label{gamma-a}
  \gamma(E) &= \frac{E^{d-2}}{4(2\pi)^{2d}}
  \int_{\mathbb{S}^{d-1}}\!\d\varOmega(\omega)  \int_{\mathbb{S}^{d-1}}\!\d\varOmega(\theta)
  \biggabs{\int_{\R^d}\!\dx\, V(x) \, \e^{\i\sqrt{E}x\cdot\theta}
  \psi(x;\omega,E)}^2 \nonumber\\
  &=
  \frac{E^{(d-1)/2}}{(2\pi)^{d+1}}
    \int_{\mathbb{S}^{d-1}}\!\d\varOmega(\omega)  \int_{\mathbb{S}^{d-1}}\!\d\varOmega(\theta)\,
    \abs{a(\theta,\omega; E)}^2
\end{align}
for Lebesgue-a.a.\ $E\ge 0$, where the scattering amplitude is defined by
\begin{equation}
  a(\theta,\omega;E) :=
  - \e^{\i\pi(d-3)/4}\frac{E^{(d-3)/4}}{2(2\pi)^{(d-1)/2}}
  \int_{\R^d}\!\dx\, \e^{-\i\sqrt{E}\,x \cdot\theta} \, V(x) \, \psi(x;\omega,E),
 \end{equation}
see \cite[Eq.\ (1.17)]{MR1774673}. Following
\cite[Sect.~8.5]{MR1774673}, the right-hand side of \eqref{gamma-a} can be rewritten
in terms of the total scattering cross-section
$\sigma(E,\omega) = \int_{\mathbb{S}^{d-1}}\d\varOmega(\theta)\,\abs{a(\theta,\omega;E)}^2$
or in terms of the scattering matrix $S(E)$, which proves the lemma.
\end{proof}

\begin{proof}[Proof of Corollary~\ref{corollary:anderson}]
The corollary is concerned with the special case $d = 3$ and $V$ spherically
symmetric. In particular, the total scattering cross-section
$\sigma(E) = \sigma(E,\omega)$ does not
depend on the incident direction $\omega \in \mathbb{S}^{2}$ for any $E \ge 0$.
Following \cite[Eq.\ (111)]{MR529429}, we can rewrite
$\sigma(E)$ in terms of the partial wave amplitudes $f_\ell(E)$ so that \eqref{gamma-s-mat} becomes
\begin{equation}
  \gamma(E)
  = \frac{E}{(2\pi)^4} \int_{\mathbb{S}_2}\!\d\varOmega(\omega)\, \sigma(E)
  = \frac{E}{\pi^2} \sum_{\ell=0}^\infty (2\ell+1)\abs{f_\ell(E)}^2
\end{equation}
for Lebesgue-a.a. $E \ge 0$. Now, \cite[Eq.\ (112b)]{MR529429} finishes the proof.
\end{proof}

\section*{Acknowledgement}
We thank Peter Otte for many stimulating discussions on this subject.


\newcommand{\etalchar}[1]{$^{#1}$}
\newcommand{\noopsort}[1]{}

\end{document}